\renewcommand{\ALG@name}{Pseudocode}
\newcommand\l@subroutine{\@dottedtocline{1}{1.5em}{2.3em}}\makeatother
\begin{document}

\title{MPC meets SNA: A Privacy Preserving Analysis of Distributed Sensitive Social Networks}
\titlerunning{Hamiltonian Mechanics}  
%
\author{Varsha Bhat Kukkala \and S.R.S Iyengar}

\institute{Indian Institute of Technology Ropar, India\\    \email{varsha.bhat@iitrpr.ac.in, sudarshan@iitrpr.ac.in}}
\authorrunning{Kukkala et al.} 
%
%

\maketitle              

\begin{abstract}
In this paper, we formalize the notion of distributed sensitive social networks (DSSNs), which encompasses networks like
enmity networks, financial transaction networks, supply chain
networks and sexual relationship networks. Compared to the well studied traditional social networks, DSSNs are often more
challenging to study, given the privacy concerns of the individuals
on whom the network is knit. In the current work, we envision
the use of secure multiparty tools and techniques for performing
privacy preserving social network analysis over DSSNs. As a step towards realizing this, we design efficient data-oblivious algorithms for computing the K-shell decomposition and the PageRank centrality measure for a given DSSN.
The designed data-oblivious algorithms can be translated into equivalent secure computation protocols. 
We also list a string of challenges that are needed to be addressed, for
employing secure computation protocols as a practical solution for studying DSSNs.
\end{abstract}
\section{Motivation}

Animosity between colleagues is inevitable and occurs in most organizations. 
These personal feelings are known to affect an individual\rq s professional decisions. 
For example, it is well studied that during the formation of teams, employees tend to choose likable colleagues as co-workers despite their incompetence, rather than picking a ``competent jerk'' \cite{casciaro2005competent}. 
In order to better understand employee dynamics, the animosity between them can be modeled as a social network, which we refer to as the \textit{enmity network}. 
The network comprises of a set of nodes and a set of edges, where each node is representative of an employee and an edge from node $A$ to $B$ depicts the hatred that employee $A$ has for employee $B$. 
The structure of the enmity network provides an overall view of how the employees relate to one another. Moreover, the importance of a study investigating the structure of informal networks on the employees, such as the enmity network, is well established \cite{krackhardt200116}. 
However, in the case of an enmity network, such a study is infeasible since the information regarding the network is both distributed and highly sensitive. 
The network data is distributed, in the sense that the presence or absence of an edge from $A$ to $B$ is known only to employee $A$. Thus, each employee is aware of only those edges in the network that emanate from the node corresponding to her. 
Additionally, the network is sensitive since employees refrain from disclosing their feelings of hatred and anger, as the professional setting in most organizations will consider this unacceptable. 
It is a must for the organization to address these feelings, because when ignored these could in turn manifest as passive-aggression, clearly undesirable for conducive work environment \cite{website2015}.  \\

Several  social networks, apart from enmity networks, pose similar challenges in their study. The common features across these networks, that make the study challenging, are that the network data is present in a distributed setting and that the data is  highly sensitive/private an information. We define this class of social networks as \textit{distributed sensitive social networks} (DSSNs).  
Supply chain networks of organizations, romantic relationships of individuals and the network of financial transactions between entities are examples of a few social networks that fall under this category. A detailed discussion on the notion of DSSNs is provided in Section \ref{dssn}. \\

A traditional approach adopted for analyzing distributed data is the \textit{trusted third party} (TTP) model. This model involves a central trusted authority which collects information from each individual who has a share of the complete data. This central authority, known as the trusted third party, aggregates the data and will either perform the required analysis or  release a sanitized version of the data for analysis.
This model has been successfully adopted in studying few of the social networks, such as online friendship networks, email communication networks, and collaboration networks. The presence or absence of an edge in a DSSN is associated with a high quotient of sensitivity. Hence, individuals holding the network data will be unwilling to disclose their private information to a trusted third party. The above claim is well supported by the results of a survey conducted as a part of the current work. A survey on the sensitivity of personal data was conducted to determine how private do individuals consider their data. The results of the survey are discussed in Section \ref{dssn}. This renders the TTP model infeasible as a solution for the analysis of DSSNs, necessitating for a new methodology to analyze social networks that are both distributed and sensitive in nature. The current work aims at proposing a new privacy preserving technique for analyzing distributed sensitive social networks. \\


	\section{Problem Statement}\label{problem_statement}

We model a \textit{distributed sensitive social network} as a \textit{graph} $G(V,E)$, where $V$ is the set of nodes and $E$ is the set of edges. Each node in $V$ is representative of an individual/organization on whom the network is considered. An edge $e \in E$ is an ordered pair of nodes $(u,v)$, which depicts the private interaction extending from node $u$ to node $v$. While modeling DSSNs, we assume that the set $V$ of vertices is publicly known and that the edges are directed\footnote{An undirected graph with bi-directional edges will fall under this definition as well}. 
The edges of the graph are considered private, given the sensitive nature of interactions that they model. 
Hence, the presence or absence of an edge $(u,v)$ could be private to node $u$ (as in the case of enmity network) or to both $u$ and $v$ (as in the case of financial transaction network). 
 The individuals who collectively retain the interaction information, i.e. the set $E$ of the graph $G(V,E)$, will henceforth be referred to as \textit{parties}. Given the distributed nature of a DSSN, we can model a party $P_i$ to possess as its private information a subgraph $G_i(V,E_i)$, where the edge set $E_i$ induces a partition $V_i$ of the vertex set and $E_i \subset E$ consists of all those edges $(u,v) \in E \ni' u \in V_i$. That is, $E_i$ consists of all those edges in $E$ that lie within the partition $V_i$ or those emanating from nodes in $V_i$. In the case of a DSSN where an edge $(u,v)$ is private to both $u$ and $v$, then the set $E_i$ will also include edges $(u,v) \ni' v \in V_i$, as in the case of a financial transaction network. $E_i$ in this case will additionally have the edges incoming to the partition $V_i$. 
The above definition gives a general picture of how the network data is distributedly held by the parties. The illustration in Figure \ref{fig:dist_fin_net} depicts a financial transaction network distributedly held by three parties (in this case banks). 
It is important to note that when parties are themselves individuals in the DSSN, the size of each partition is one i.e. for every $i$, $|V_i| = 1$.\\


\begin{figure}
\centering
\includegraphics[width=7cm,height=7cm]{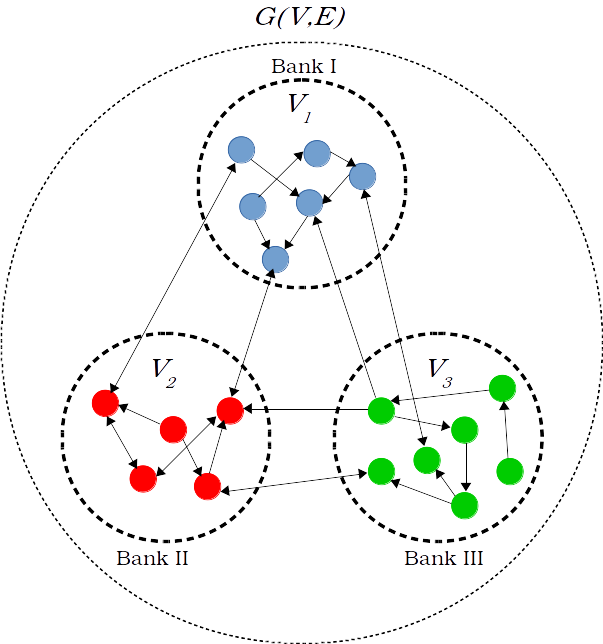}
\caption{The figure illustrates an example financial transaction network distributedly held by three parties - Bank I, Bank II and Bank III. The induced partitions are highlighted in the figure and each partition denotes the bank accounts belonging to a single bank. An edge $(u,v)$ corresponds to a financial transaction from account $u$ to $v$.}
\label{fig:dist_fin_net}
\end{figure}

A study of the structural characteristics of a graph representing a social network is defined as \textit{social network analysis} (SNA) \cite{scott2012social}. Most commonly employed SNA techniques include degree distribution, clustering, centrality measures and community detection \cite{scott2012social}.
Performing social network analysis on a distributed sensitive social network is akin to running a distributed algorithm $\mathcal{A}$, that takes as input the graph $G(V,E)$ representing the DSSN. 
The parties run the algorithm $\mathcal{A}$ distributedly by communicating and coordinating their actions with one another through passing of messages. However, unlike any regular distributed algorithm, $\mathcal{A}$ must ensure that certain security requirements are met. These requirements are enlisted below.\\

\begin{definition}{(Security Requirements)}\label{distributed_algo}
A distributed algorithm $\mathcal  {A}$ computes a network measure $m$ over an input DSSN $G(V,E)$ in a privacy-preserving manner if the following constraints are met:
\begin{enumerate}[label=(\alph*)]
\item\textbf{Privacy}: No party must learn anything more than its input and output to the algorithm. Intuitively, the exchange of messages between parties, throughout the run of $\mathcal{A}$, must not compromise any party $P_i$'s private information ($E_i$). 
\item \textbf{Correctness}: The output $m$ generated by algorithm $\mathcal{A}$ must indeed be the desired network measure, globally fixed by the parties prior to the run of the algorithm. 
\end{enumerate}
\end{definition}

The current work envisions a thorough investigation of the class of under-explored social networks that we classify as distributed sensitive social networks. We question the possibility of designing efficient distributed algorithms, described in Definition \ref{distributed_algo}, as a privacy preserving approach for performing social network analysis on DSSNs.


\section{Proposed Vision: MPC meets SNA}\label{vision}

The question being addressed falls under the broader umbrella of the field of secure multiparty computation (MPC). Secure computation deals with designing algorithms/protocols so that a set of $n$ parties (or individuals) $P_1, P_2, \dots P_n$ having private data $x_1,x_2,...,x_n$ respectively, can securely compute a public function $f(x_1,...,x_n)$. In our case, the public function $f$ is specific to network analysis algorithms and the private data is in the form of network interactions. One can design multiparty computation protocols for different security requirements. However, the minimum notion of security guaranteed by the protocols is that of privacy and correctness, as specified under Definition \ref{distributed_algo}. Besides, secure computation also accounts for the scenario of collusion involving collaboration between parties with the ill-intent to learn other parties' private data and the possibility of parties deviating from the actual protocol to disrupt the process. This allows one to design protocols while accounting for the level of security demanded by the application at hand. For a discussion on the various security models studied in MPC, we point the reader to the work by Lindell and Pinkas \cite{lindell2009secure}.\\

Generic MPC constructions for various security definitions exist to compute any arbitrary function $f$ securely \cite{goldreich1987play,chaum1988multiparty}. However, these constructions are generally not efficient enough to be put to use in practice. A more sought after approach is to design function specific efficient MPC protocols. Few application scenarios that have adopted this approach include auctions \cite{bogetoft2006practical}, benchmarking \cite{atallah2004private} and voting \cite{cramer1997secure}. As the vision of the current paper, we propose the following:
\vspace*{0.05cm}
\newline{\hspace*{0.2cm}
``\textit{Design of efficient multiparty computation protocols for performing a privacy preserving  social network analysis of distributed sensitive social networks. }''\vspace*{0.2cm}} 
\newline{Designing} efficient and secure MPC protocols for analyzing DSSNs requires a thorough understanding of the state of the art MPC tools and techniques, the access patterns of network algorithms and a list of requirements/constraints imposed by the application scenario. Hence, the current work envisions the amalgamation of ideas from two seemingly different domains of research, namely, secure computation (MPC) and social network analysis (SNA). In the current work, we also enumerate over the challenges that need to be overcome to use MPC as a practical solution to study DSSNs.  \\

\subsection*{Our Contribution}
In the current paper, we introduce the notion of DSSNs and highlight the need for studying these less explored networks. We provides a detailed discussion on the characterization of DSSNs (Section \ref{dssn}). As a step towards realizing the proposed vision, we design efficient data-oblivious algorithms for two of the widely employed network analysis methods, namely, the K-shell decomposition algorithm and the Pagerank centrality measure (Section \ref{algorithms}). We discuss on the translation of these oblivious algorithms into their equivalent MPC protocols (Section \ref{prelim}). The design of such MPC protocols is just the first step. We list a string of challenges to be addressed in order to fully realize the proposed vision ( Section \ref{challenges}). These challenges are not limited to the MPC setting, but address the different aspects of the general question of performing privacy preserving analysis of DSSNs. For completeness, we also enumerate over the marginal research that has been conducted previously in the intersection of network analysis and MPC in Section \ref{related_work}. \\


\section{Distributed Sensitive Social Networks}\label{dssn}

The class of distributed sensitive social networks are characterized by two properties, namely, the interactions in the network being sensitive  and the network data being distributedly held by a set of parties. Furthermore, the degree of sensitivity or distributedness varies across DSSNs, a brief discussion on which is presented next. \\

\paragraph*{Degree of sensitivity} This parameter represents the extent to which the parties holding the DSSN consider their data private. To capture the gray scale value of sensitivity of a DSSN modeled as $G(V,E)$, we define a coefficient of sensitivity $\mu(G)$ as the fraction of parties distributedly holding $G$ who are unwilling to share their private network information with a trusted third party. By definition, the coefficient $\mu(G)$ lies between 0 and 1 and it signifies  the extent to which using the TTP model would fail while studying the graph $G$. To calculate the sensitivity coefficient for various DSSNs, we conducted a survey over 160 participants, with $90\%$ of the participants being from the age group 17-22. The targeted age group is as specified before since it is observed that romantic relationships, crushes, enmity, etc. are commonly observed interactions across these age groups. The survey participants were asked whether they will share their private information (like email transactions, list of close friends, people they dislike and people they have a crush on) with an external agency, who is collecting this information for research purposes and guarantees to protect their privacy. A plot containing sensitivity coefficient of various DSSNs is available in Figure \ref{fig:survey_plot}. From the plot we can infer that while friendship, trust and email communication networks fare lower on the scale of sensitivity, DSSNs capturing enmity, romantic and sexual relationships are highly sensitive. This also explains why social networks like friendship and communication have been able to make a digital footprint and hence are well studied, while many other sensitive social networks continue to remain in the distributed setting.\\ 

\begin{figure}
\includegraphics[width=8cm]{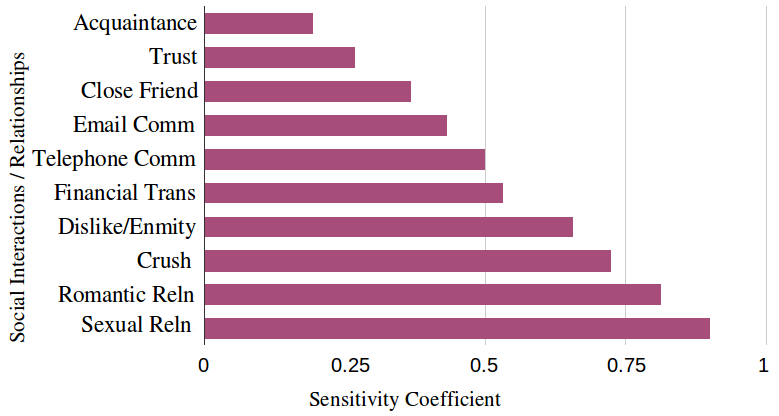}
\centering	
\caption{The plot depicts the coefficient of sensitivity for various DSSNs. Abbreviations used in the plot: Comm: Communication, Trans: Transaction, Reln: Relationship.}
\label{fig:survey_plot}
\end{figure}

\paragraph*{Degree of distributedness} The extent to which a network is distributedly held may vary across the networks. A sensitive social network is termed \textit{completely distributed} if each party represents a node in the social network. Examples include the trust network, romantic relationship network and the supply chain network. At times, the DSSN may be distributedly held by a small set of parties, such that each party has the network information over a subset of the nodes. These networks are termed \textit{partially distributed sensitive social networks}. Examples include the financial transaction network and the telephone communication network, where banks and telecommunication companies respectively are the associated parties. \\

A discussion on all the DSSNs that have appeared in the research literature is available in Appendix \ref{appendix_dssn}.


\section{Preliminaries}\label{prelim}
The paper proposes the design of secure multiparty computation protocols for performing privacy preserving network analysis. As a step towards realizing the proposed vision, we design data-oblivious algorithms for two commonly employed network measures. The current section focuses on introducing the notion of \textit{data-oblivious algorithms} and further we discuss on how to translate these oblivious algorithms to  secure MPC protocols with comparable efficiency. \\

\subsection*{Data-Oblivious Algorithms}

An algorithm is termed \textit{data-oblivious} if its control flow and memory access pattern do not leak any information about the input other than its length. Efficient data-oblivious algorithms can be employed for outsourcing computation \cite{eppstein2010privacy} and designing secure multiparty protocols \cite{blanton2013data}. In this paper, we  design data-oblivious network analysis algorithms which reveal nothing but the input length, which in our case are the number of nodes and the number of edges in the input graph. Further in this section we present some primitive data types and programming constructs that will be assumed throughout the paper. \\

The data-oblivious algorithms designed in the paper assume both integer and rational data types. A variable named with the symbol `tilde' over it will represent a rational data value ( For example $\overset{\sim}{a}$) and a variable without a tilde will represent an integer. Each data value is classified as \textit{public} or \textit{private}, where a public data value is not considered to be sensitive information. Hence, the control flow and memory access pattern of a data-oblivious algorithm can depend on public values as well, along with the input length. In the designed SNA algorithms, public and private values are colored \textcolor{ForestGreen}{green}  and \textcolor{red}{red} respectively.  We also assume certain primitive operations over integers and rational numbers, including addition($+$), multiplication($*$), division($/$), comparison($>,<$) and equality check($==$). The efficiency of a data-oblivious algorithms is evaluated in terms of the number of primitive operations employed by it. \\

Performing array accesses (A[$\textcolor{red}{i}$]) over secret indexes ($\textcolor{red}{i}$) is a  useful primitive for designing data-oblivious algorithms. Efficient array accesses over secret indexes can be performed using the Oblivious RAM (ORAM) primitive \cite{stefanov2013path,wang2015circuit,liu2014automating}. Given an \textcolor{ForestGreen}{$n$} length array $A$ and the index $\textcolor{red}{i}$, the block $A[\textcolor{red}{i}]$ can be accessed data-obliviously using an ORAM in $poly.log(n)$ primitive operations. Henceforth, $f(n)$ denotes the number of primitive operations employed while reading/writing to an element in an ORAM of size $n$. In all the data-oblivious algorithms, an array stored in the ORAM is colored  \textcolor{Violet}{blue}. \\

The designed data-oblivious algorithms employ two programming constructs, namely, \textit{for} and \textit{if-else}. All the \textit{for} loops employed are over public variables and hence the \textit{for} loop  is nothing but a succinct representation of a set of statements that are repeatedly executed  a publicly known number of times. Converting the non-oblivious \textit{if-else} construct into its oblivious counterpart has been well studied in the literature \cite{liu2014automating}. Intuitively, the \textit{if-else} construct can be made oblivious by executing both the branches  \textit{if} and  \textit{else} one after the other, while ensuring that the effect of only one branch takes place and the other is executed in a dummy fashion. For the sake of brevity, we use the non-oblivious  \textit{if-else} construct in the designed algorithms, however, they are nothing but an alias for their oblivious variants. \\

\subsection*{Translating Data-Oblivious Algorithms to Equivalent MPC Protocols}

MPC protocols are generally designed using ideal functionalities, which provide certain secure primitive operations as building blocks. All the arithmetic operations  over integers can be securely computed using the extended arithmetic black box ($\mathcal{F}_{ABB}$) \cite{damgaard2006unconditionally,damgaard2003universally}. The analogous ideal functionality for rational numbers ($\overset{\sim}{\mathcal{F}}_{ABB}$) was introduced by Catrina and Saxena \cite{catrina2010secure}. The ideal functionality for ORAM array accesses ($\mathcal{F}_{ORAM}$) has been well explored in the research literature as well \cite{wang2015circuit,keller2015oblivious}. Hence, there exists equivalent primitive MPC building blocks for each primitive operation assumed in the data-oblivious algorithms ($+,*,/,<,>,==$). This provides a natural procedure to convert any data-oblivious algorithm to its equivalent MPC protocol. The security of the designed protocols will follow from the UC theorem \cite{canetti2001universally} and the fact that the sequence of primitive operations employed in the designed oblivious algorithm is predetermined and depends only on public values. Due to space constraint, we have refrained from elaborating on certain concepts including the security definition of MPC protocols, the notion of ideal functionalities and the UC theorem. However, a discussion on them can be found in the work by Damagard et al. \cite{mpcbook}. \\ 


\section{Design and Analysis of Data-Oblivious SNA Algorithms}\label{algorithms}

Computing the centrality measure of nodes is a popular network analysis technique for finding important nodes in a network. However, the notion of importance is dependent on the kind of interactions being modeled as well as the application scenario under study. Thus, there exist numerous centrality measures to rank nodes with respect to different application criteria. In the current section, we present data-oblivious algorithms for computing two of the commonly used centrality measures, K-shell decomposition and Google PageRank. \\

For each network measure of interest, we first present the most widely employed algorithm for computing the network measure and discuss on which steps in each algorithm are non-oblivious to the input graph. Further, we present its oblivious counterparts designed using primitive black-box operations, as mentioned in the Preliminary Section \ref{prelim}. Finally, a discussion on the correctness, obliviousness and efficiency for each designed protocol is provided.

\subsection*{Oblivious graph data representation}

 \textit{Adjacency list} and \textit{adjacency matrix} are two widely employed graph data representations. Since the length of an adjacency list leaks the degree of the corresponding vertex, the adjacency list data structure is unsuitable for the oblivious setting. While the adjacency matrix is reasonably oblivious (it discloses only the number of nodes), it is space inefficient.  The adjacency matrix is never used in practice given that most real world network are sparse i.e have $|E| = \mathcal{O}(|V|)$. In this section we propose the \textit{edgelist graph representation},  a new graph data structure that outperforms both the adjacency list and adjacency matrix representations in the oblivious setting.

\begin{figure}
\begin{tabular}{@{}cccc@{}}
\includegraphics[width=0.2\linewidth]{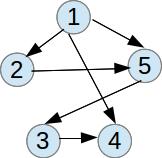}~~ &
\includegraphics[width=0.2\linewidth]{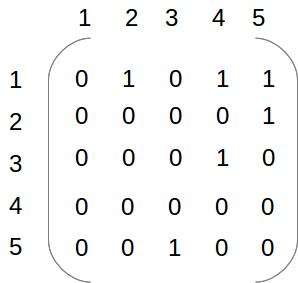} ~~&
\includegraphics[width=0.2\linewidth]{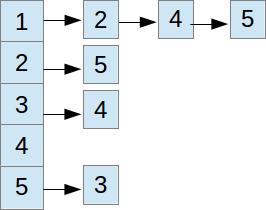} ~~&
\includegraphics[width=0.3\linewidth]
{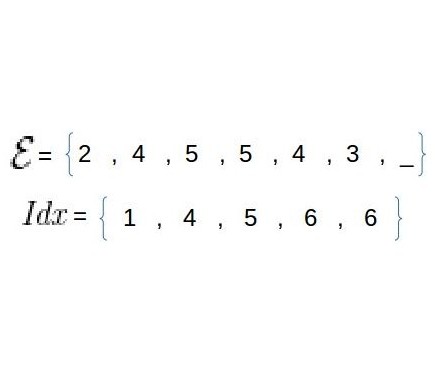}\\
(a) & (b) & (c) & (d)
\end{tabular}
\caption{The figures represent the different representations of a graph. }
\end{figure}


\paragraph*{Edgelist graph representation} The graph $G(V,E)$ (with $n = |V|$ and $m = |E|$) is represented using a two tuple $(\mathcal{E}, Idx)$, where $\mathcal{E}$ is an array of length $(m+1)$ containing concatenated adjacency lists of all nodes and an empty element at the end. The $Idx$ array, of length $n$, has an entry corresponding to each node as defined below:\\ 

$Idx[u] =
\left\{
	\begin{array}{ll}
		i  & \text{if out-degree($u$)$\neq 0$ and $E[i]$}\\ 
            & \text{is the first neighbor of $u$ in $E$ }\\
		Idx[u+1] & \text{$u \neq n$ and out-degree($u$)$ = 0$} \\
         m+1 & \text{$u = n$ and out-degree($u$)$ = 0$ }
	\end{array}
\right.$ \\

The edgelist graph representation is oblivious to the graph structure and reveals only the number of nodes and the number of edges in the network. Its space complexity is $(m+n+1)$, making it efficient for storing sparse graphs. All the designed network analysis protocols in this paper assume that the input graph is represented in the edgelist format and is stored in the ORAM. For the analogous MPC setting, we do not discuss on how the distributedly held network can be stored in the edgelist representation in the $\mathcal{F}_{ORAM}$, since the exact method adopted will depend on whether the network is completely or partially distributed. Kukkala et al. \cite{kukkala2016privacy} discuss on how to store the adjacency matrix of a distributedly held network  in the the $\mathcal{F}_{ABB}$. A similar approach can be adopted for	 storing the edgelist representation of a network that is distributedly held.

\subsection{K-Shell Decomposition Algorithm}

As an alternative to density as a measure for cohesiveness of the network, Seidman proposed the idea of  \textit{k-cores} that recursively decomposes the network into cohesive subgraphs known as cores \cite{seidman1983network}.
The basic idea is to recursively construct the subgraph of a graph such that the measure of cohesiveness increases as we recurse through. Thus, the process fragments the network into a sequence of subgraphs such that each is a subgraph of the preceding graph and has higher cohesiveness. Decomposing the graph into cores also helps in better identifying the most influential nodes in the network \cite{pei2013spreading}. The innermost core nodes are known to be influential spreaders of information, diseases, etc. A formal definition of k-cores and the algorithm to generate them are discussed further.\\

\begin{definition}
Given a graph $G(V,E)$, consider the subgraph $G_k$ induced by the largest vertex set $V_k$, such that $V_k \subseteq V$ and $\forall v \in V_k$ the degree of $v$ in $G_k$ is at least $k$. Such an induced subgraph $G_k$ is called the \textit{k-core} of graph $G$.
\end{definition}

\noindent From the above definition, it is clear that we can construct a sequence of subgraphs as shown below, where $G_i \leq G_j$ denotes that the graph $G_i$ is a subgraph of graph $G_j$:
\[ G_k \leq G_{k-1} \leq \dots G_2 \leq G_1 \leq G_0=G(V,E) \]

\begin{definition}\label{shell_set}
Given $G_i$ and $G_{i+1}$ are the $i$ and $i+1$ cores of a graph respectively, then :
\[ shell_i=\{v \in G_i | v \in G_{i} - G_{i+1} \} \]
\end{definition}
\noindent Thus, $shell_i$ denotes  the set of all nodes in the graph that belong to $i$-core but not the $(i+1)$-core. This process of decomposing the graph into cores assigns each node a unique shell number.

\begin{definition}
Given a graph $G(V,E)$, For every vertex  $ v \in V$ we define $shell(v)=i$ if $v \in shell_i$, where $shell_i$ is as specified in Definition \ref{shell_set}.\\ 
\end{definition}

Batagelj and Zaversnik were the first to propose an efficient technique known as the K-shell decomposition algorithm \cite{batagelj2003m} that determines the cores of a given network. 
The idea of the algorithm is to prune nodes with degree lesser than a threshold, in each step. The nodes pruned at step $i$ are assigned to $shell_i$ and the graph that remains constitutes the $(i+1)$-core. The process repeats by considering the $(i+1)$-core. The details of the algorithm are given in the above pseudocode. \\

\begin{algorithm}
\caption{K-shell decomposition}
\begin{algorithmic}[1]
\Require{Graph $G(V,E)$ with $|V|=n$ and $|E|=m$}
\Ensure{ $shell(v)$, for each $v \in V$  }	
\State $V \leftarrow$ sorted list of vertices based on their degree
\For  { each $v \in V$ in the order} 
\State $shell(v) \leftarrow degree (v)$
\For { $u \in Adj(v)$}  \Comment{for all neighbors of vertex $v$}
\If {degree($u$) $>$ degree($v$)}
\State $degree(u) \leftarrow degree(u) - 1 $
\EndIf
\EndFor
\State $G \leftarrow G - \{v\}$  \Comment{Prune vertex $v$}
\State $V \leftarrow$ sort vertices based on their degree
\EndFor
\end{algorithmic}
\label{pseudo_kshel}
\end{algorithm}

We observe that the pseudocode for K-shell has two nested $for$ loops, in steps 2 and 4. The outer one loops over the list of vertices, sorted with respect to degree and the inner $for$ loops over neighbors of the current vertex. Hence, the number of times the inner $for$ loop is executed is dependent on the node currently set to be pruned in the outer loop. This shows that the algorithm is input graph dependent, making it non-oblivious. 
In order to make the algorithm data oblivious, we use the technique of \textit{loop coalescing} \cite{liu2015oblivm}, which involves combining both the loops while guaranteeing the correct functionality of the algorithm. We provide the oblivious equivalent in Protocol \ref{alg:oblivious_kshell}. The idea is to coalesce the loops such that, in each iteration of the coalesced loop, we will either access an element of the sorted vertex list or access a neighbor of the current node through the edge list $\mathcal{E}$. To decide which access needs to be performed in the current iteration, we keep track of two indexes $i$ for indexing elements of sorted list of vertices $vert$ and $j$ to index over edge list $\mathcal{E}$.

\begin{protocol}\label{kshell}
\caption{K-shell decomposition}
\begin{algorithmic}[1]
\Require{ Graph ($\textcolor{Violet}{\mathcal{E}}$, $\textcolor{Violet}{Idx}$)}
\Ensure{The shell number of each of the nodes $v \in V$ denoted by $shell(v)$ }	
\State $\textcolor{Violet}{deg}[\textcolor{ForestGreen}{i}] \leftarrow \textcolor{ForestGreen}{0}$;    $\hspace{5mm}\forall\  \textcolor{ForestGreen}{1} \leq \textcolor{ForestGreen}{i} \leq \textcolor{ForestGreen}{n}$
\State $\textcolor{Violet}{vert}[\textcolor{ForestGreen}{i}] \leftarrow  \textcolor{ForestGreen}{i}$;    $\hspace{5mm}\forall\  \textcolor{ForestGreen}{1} \leq \textcolor{ForestGreen}{i} \leq \textcolor{ForestGreen}{n}$
\State $\textcolor{Violet}{bin}[\textcolor{ForestGreen}{i}] \leftarrow 0$;  $\hspace{5mm} \forall\  \textcolor{ForestGreen}{1} \leq \textcolor{ForestGreen}{i} \leq \textcolor{ForestGreen}{n}-\textcolor{ForestGreen}{1}$
\For { $\textcolor{ForestGreen}{i}=\textcolor{ForestGreen}{1}$ to $\textcolor{ForestGreen}{n}$} 
\State $\textcolor{Violet}{deg}[\textcolor{ForestGreen}{i}]  \leftarrow  \textcolor{Violet}{Idx}[\textcolor{ForestGreen}{i}+\textcolor{ForestGreen}{1}] - \textcolor{Violet}{Idx}[\textcolor{ForestGreen}{i}]$
\State $\textcolor{Violet}{bin}[\textcolor{Violet}{deg}[\textcolor{ForestGreen}{i}]] \leftarrow \textcolor{Violet}{bin}[\textcolor{Violet}{deg}[\textcolor{ForestGreen}{i}]] + \textcolor{ForestGreen}{1}$
\EndFor
\State $\textcolor{red}{start}  \leftarrow  1$
\For {$\textcolor{ForestGreen}{i} \leftarrow \textcolor{ForestGreen}{1}$ to $\textcolor{ForestGreen}{n}$}
\State $\textcolor{red}{temp}  \leftarrow  \textcolor{Violet}{bin}[\textcolor{ForestGreen}{i}]$
\State $\textcolor{Violet}{bin}[\textcolor{ForestGreen}{i}]  \leftarrow  \textcolor{red}{start}$
\State $\textcolor{red}{start}  \leftarrow  \textcolor{red}{start} + \textcolor{red}{temp}$
\EndFor
\State $\textcolor{Violet}{vert} \leftarrow $ $oblivious\_sort$ $(\textcolor{Violet}{vert}, \textcolor{Violet}{deg})$
\State $\textcolor{Violet}{pos}[\textcolor{Violet}{vert}[\textcolor{ForestGreen}{i}]] \leftarrow \textcolor{ForestGreen}{i}$, $\forall \ \textcolor{ForestGreen}{1} \leq \textcolor{ForestGreen}{i} \leq \textcolor{ForestGreen}{n}$
\State $\textcolor{ForestGreen}{i}  \leftarrow  \textcolor{ForestGreen}{1}$;    $\textcolor{red}{v} \leftarrow \textcolor{Violet}{vert}[\textcolor{ForestGreen}{i}]$;     $\textcolor{red}{j}\leftarrow \textcolor{Violet}{Idx}[\textcolor{red}{v}]$
\For {$\textcolor{ForestGreen}{iter}  \leftarrow  1$ to $\textcolor{ForestGreen}{n}+\textcolor{ForestGreen}{m}-\textcolor{ForestGreen}{1}$}
\If {$(\textcolor{red}{j}== \textcolor{Violet}{Idx}(\textcolor{red}{v}+\textcolor{ForestGreen}{1}))$}
\State $\textcolor{ForestGreen}{i} \leftarrow \textcolor{ForestGreen}{i}+\textcolor{ForestGreen}{1}$
\State $\textcolor{red}{v}  \leftarrow  \textcolor{Violet}{vert}[\textcolor{ForestGreen}{i}]$; $\textcolor{red}{j}  \leftarrow  \textcolor{Violet}{Idx}[\textcolor{red}{v}]$
\Else 
\State $\textcolor{red}{u} \leftarrow \textcolor{Violet}{E}[\textcolor{red}{j}]$
\If {$(\textcolor{Violet}{deg}[\textcolor{red}{u}] > \textcolor{Violet}{deg}[\textcolor{red}{v}])$}
\State $\textcolor{red}{du}  \leftarrow  \textcolor{Violet}{deg}[\textcolor{red}{u}]$; $\textcolor{red}{pu}  \leftarrow  \textcolor{Violet}{pos}[\textcolor{red}{u}]$
\State $\textcolor{red}{pw}  \leftarrow  \textcolor{Violet}{bin}[\textcolor{red}{du}]$; $\textcolor{red}{w} \leftarrow \textcolor{Violet}{vert}[\textcolor{red}{pw}]$
\If {($\textcolor{red}{u}\ \bcancel{==} \  \textcolor{red}{w}$)}
\State $\textcolor{Violet}{pos}[\textcolor{red}{u}]  \leftarrow  \textcolor{red}{pw}$; $\textcolor{Violet}{vert}[\textcolor{red}{pu}] \leftarrow \textcolor{red}{w}$
\State $\textcolor{Violet}{pos}[\textcolor{red}{w}]  \leftarrow  \textcolor{red}{pu}$; $\textcolor{Violet}{vert}[\textcolor{red}{pw}]  \leftarrow  \textcolor{red}{u}$
\EndIf
\State $\textcolor{Violet}{bin}[\textcolor{red}{du}] \leftarrow  \textcolor{Violet}{bin}[\textcolor{red}{du}]+ \textcolor{ForestGreen}{1}$
\State $\textcolor{Violet}{deg}[\textcolor{red}{u}] \leftarrow \textcolor{Violet}{deg}[\textcolor{red}{u}] - \textcolor{ForestGreen}{1}$
\EndIf
\State $\textcolor{red}{j} \leftarrow \textcolor{red}{j}+\textcolor{ForestGreen}{1}$
\EndIf
\EndFor
\end{algorithmic}
\label{alg:oblivious_kshell}
\end{protocol}

\begin{theorem}
The proposed Protocol \ref{alg:oblivious_kshell} correctly computes the shell number denoted by $shell(v)$~ $\forall \  v \in V$,  for the input graph $G(V,E)$.
\end{theorem}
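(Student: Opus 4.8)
\medskip

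\noindent\textbf{Overall approach.} The plan is to show that Protocol~\ref{alg:oblivious_kshell} is a faithful, loop-coalesced reimplementation of the classical Batagelj--Zaversnik algorithm (Algorithm~\ref{pseudo_kshel}), which is already known to compute $shell(v)$ correctly. So the task reduces to establishing a simulation/invariant argument: at every point in the execution of the coalesced loop, the arrays $\textcolor{Violet}{deg}$, $\textcolor{Violet}{bin}$, $\textcolor{Violet}{vert}$, $\textcolor{Violet}{pos}$ encode exactly the state that the bin-sort bookkeeping of Batagelj--Zaversnik would be in after processing the corresponding prefix of pruned vertices. First I would set up notation: let $v_1, v_2, \dots, v_n$ be the vertices in the (stable) degree-sorted order produced on line~14, and for each prefix let $\mathrm{core}$-degree denote the residual degree in the graph with $v_1,\dots,v_{k}$ removed. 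I would then state the loop invariant for the main \textbf{for} loop (lines~17--37): after the iteration in which vertex $v_i$ has just been fully processed (i.e.\ $\textcolor{red}{j}$ has swept its whole adjacency interval $[\textcolor{Violet}{Idx}[v_i],\textcolor{Violet}{Idx}[v_i+1])$), (1) $\textcolor{Violet}{vert}$ lists the remaining vertices sorted by current residual degree with $v_i,\dots,v_1$ in their final shell positions; (2) $\textcolor{Violet}{pos}$ is the inverse permutation of $\textcolor{Violet}{vert}$; (3) $\textcolor{Violet}{deg}[u]$ is the residual degree of each remaining $u$; and (4) $\textcolor{Violet}{bin}[d]$ points at the first slot in $\textcolor{Violet}{vert}$ holding a (remaining) vertex of residual degree $d$.

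\medskip

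\noindent\textbf{Key steps, in order.}
\emph{Step 1 (initialization).} Verify that lines~1--16 set up the invariant before the first iteration: lines~5--6 compute $\textcolor{Violet}{deg}[i]=\textcolor{Violet}{Idx}[i+1]-\textcolor{Violet}{Idx}[i]$, which by the definition of the edgelist representation is exactly the out-degree of node $i$; the histogram/prefix-sum on lines~3--13 is the standard counting-sort setup so that after line~13 $\textcolor{Violet}{bin}[d]$ is the starting index of the block of degree-$d$ vertices; line~14's oblivious sort produces the degree-sorted $\textcolor{Violet}{vert}$; line~15 builds $\textcolor{Violet}{pos}$ as its inverse. \emph{Step 2 (loop-coalescing structure).} Argue that the two-index scheme $(\textcolor{ForestGreen}{i},\textcolor{red}{j})$ with the $n+m-1$ iteration budget visits, in order, exactly the sequence ``advance to next vertex / process next out-neighbor'' that the nested loops of Algorithm~\ref{pseudo_kshel} would visit: the \textbf{if} on line~18 fires exactly when the current vertex's adjacency interval is exhausted, advancing to $v_{i+1}$; otherwise line~22 reads the next out-neighbor $u=\textcolor{Violet}{E}[\textcolor{red}{j}]$ and line~35 increments $\textcolor{red}{j}$. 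A short counting argument shows the total iteration count is $n-1$ vertex-advances plus $m$ edge-visits, matching the bound. \emph{Step 3 (the decrement/swap block, lines~23--33).} This is the heart: when $\textcolor{Violet}{deg}[u]>\textcolor{Violet}{deg}[v]$, we must decrement $u$'s residual degree and keep $\textcolor{Violet}{vert}$/$\textcolor{Violet}{pos}$/$\textcolor{Violet}{bin}$ consistent. Show that swapping $u$ with the first vertex $w$ of its degree bin (lines~26--29), then advancing that bin's pointer (line~30) and decrementing $\textcolor{Violet}{deg}[u]$ (line~31), is precisely the constant-time bin-update move from Batagelj--Zaversnik, and that it preserves invariants (1)--(4); the guarded swap (the $u\neq w$ check, line~25) just avoids a redundant self-swap. \emph{Step 4 (shell extraction / termination).} Note that in this protocol $shell(v)$ is read off as the residual degree of $v$ at the moment it is pruned — i.e.\ the value of $\textcolor{Violet}{deg}[v]$ when $\textcolor{red}{v}$ is first set on line~16/21, mirroring line~3 of Algorithm~\ref{pseudo_kshel}; conclude by the invariant that this equals the Batagelj--Zaversnik output, hence equals $shell(v)$ by the correctness of that algorithm. (If the statement intends $shell(v)$ to be explicitly stored, I would point to where $\textcolor{Violet}{deg}[\textcolor{red}{v}]$ is recorded; the proof is otherwise unchanged.)

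\medskip

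\noindent\textbf{Main obstacle.} The delicate part is Step~3: proving that the bin-pointer/swap gymnastics correctly maintain the ``sorted by residual degree, with $\textcolor{Violet}{pos}$ its exact inverse and $\textcolor{Violet}{bin}[d]$ pointing at the head of each degree block'' invariant through a decrement. One has to handle carefully the corner cases — when $u$ already sits at the head of its bin ($u=w$), when a bin becomes empty after the pointer advance, when $u$ is the vertex currently being processed versus a yet-unprocessed vertex, and the interaction with the already-frozen prefix of shell-assigned vertices. A second, more subtle point is arguing that processing out-neighbors only (rather than all neighbors) is legitimate: each undirected edge $\{u,v\}$ is handled exactly once, from whichever endpoint is pruned first, and when $v$ is pruned first its still-present neighbor $u$ has strictly larger residual degree, so the guard on line~23 correctly identifies exactly the neighbors whose degree must drop — this needs the observation that a vertex pruned earlier in $\textcolor{Violet}{vert}$ never has larger residual degree than one pruned later, which itself follows from the invariant. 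I expect these bookkeeping details, not any deep idea, to consume most of the proof.
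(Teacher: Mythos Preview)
Your proposal is correct and follows essentially the same approach as the paper: both argue that Protocol~1 is a loop-coalesced reimplementation of the Batagelj--Zaversnik pseudocode and inherits its correctness, with the paper's proof being a much less formal line-by-line walkthrough while you set up an explicit invariant/simulation argument. One small imprecision in your obstacle discussion: in the edgelist representation an undirected edge $\{u,v\}$ is stored in both directions and hence \emph{visited} twice (once from each endpoint), not once; what you need is that the \emph{decrement} happens exactly once, which is what the guard $\textcolor{Violet}{deg}[\textcolor{red}{u}]>\textcolor{Violet}{deg}[\textcolor{red}{v}]$ enforces, and your invariant already gives you the monotonicity needed for that.
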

\begin{proof}
In steps 1-3, we initialize a few variables - $deg$ list to maintain the degree of all nodes; $vert$ list to maintain vertices in the sorted order of their degrees; $bin$ list is intended to maintain a pointer to the first entry in $vert$ having degree $i, \forall 1\leq i \leq n-1$. In steps 4-6, we compute the degree of each node along with the number of nodes of a particular degree, which is stored in $bin$. Using this information, in steps 7-11 we compute the starting index of nodes of a particular degree, that terminates the initialization of $bin$ list. In step 12, we sort the list of vertices in $vert$ with respect to their degree. The assumed data-oblivious sorting primitive has been well studied in the literature \cite{goodrich2010randomized,leighton2014introduction}. In step 13, we initialize a list $pos$ such that $pos[i]$ will store the index of vertex $i$ in the sorted list $vert$. The variable $i$ will be used to denote the index to scan over the $vert$ list, and is initialized to the  first element of $vert$ while $j$ will be used to scan over the edge list $\mathcal{E}$. The variable $v$ denotes the vertex that is currently being pruned. The \textit{for} loop in step 15 is the coalesced loop equivalent of the nested loops in the pseudocode for K-shell. We iterate over the loop $(n+m-1)$ times, where the \textit{if-else} branch at step 16 determines if the current iteration scans over $vert$ or over the array $\mathcal{E}$. In steps 19-28, we process the neighbor of node $v$ stored in $u$. We reduce the degree of $u$ in case it is of a higher degree than $v$ and perform swap with elements in $vert$ to place $u$ in its correct position after reduction in degree. The functionality is the same as that achieved by the pseudocode of K-shell. 
\end{proof}

\begin{theorem}
The proposed Protocol \ref{alg:oblivious_kshell} for performing K-shell decomposition is data-oblivious.
\end{theorem}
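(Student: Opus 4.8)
The plan is to argue that every primitive operation executed by Protocol~\ref{alg:oblivious_kshell}, together with the sequence in which these operations are performed, depends only on the publicly known quantities $n=|V|$ and $m=|E|$, and never on the private edge data. Recall from the Preliminaries that it suffices to show this: once the control flow and the pattern of memory accesses are a fixed function of the input length, the standard translation to an MPC protocol is secure by the UC theorem, since the \emph{if-else} constructs are understood as aliases for their oblivious variants (both branches executed, one in dummy fashion) and all array accesses to blue arrays go through $\mathcal{F}_{ORAM}$.

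First I would walk through the algorithm block by block and classify the color of each loop bound and each array index. Steps~1--3 perform fixed-length initializations indexed by the public counter $\textcolor{ForestGreen}{i}$; step~5 reads $\textcolor{Violet}{Idx}$ at public indices; step~6 writes $\textcolor{Violet}{bin}$ at a \emph{private} index $\textcolor{Violet}{deg}[\textcolor{ForestGreen}{i}]$, which is exactly an ORAM access and hence oblivious by assumption. Steps~7--11 (the prefix-sum that finishes $\textcolor{Violet}{bin}$) run a public \emph{for} loop with public indices. Step~12 invokes the data-oblivious sorting primitive, whose obliviousness is established in the cited references and whose running time depends only on $n$. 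Steps~13--14 are a public loop; note that $\textcolor{Violet}{pos}$ is written at the private index $\textcolor{Violet}{vert}[\textcolor{ForestGreen}{i}]$, again an ORAM access. The crucial observation is the bound of the main loop at step~15: it is the \emph{public} constant $n+m-1$, \emph{not} a data-dependent quantity. This is precisely what loop coalescing buys us --- the nested loops of the non-oblivious pseudocode, whose iteration counts leak degrees, have been merged into a single loop whose trip count is fixed once $n$ and $m$ are known.

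Next I would handle the body of the coalesced loop. The branch condition at step~16 compares the private values $\textcolor{red}{j}$ and $\textcolor{Violet}{Idx}(\textcolor{red}{v}+\textcolor{ForestGreen}{1})$, and the inner conditions at steps~21 and~24 likewise compare private values; by the stated convention each such \emph{if-else} is shorthand for the oblivious version that executes both branches, so the instruction trace is identical regardless of which way the test actually goes. Within the branches, every array read or write that uses a private index --- $\textcolor{Violet}{E}[\textcolor{red}{j}]$, $\textcolor{Violet}{deg}[\textcolor{red}{u}]$, $\textcolor{Violet}{pos}[\textcolor{red}{u}]$, $\textcolor{Violet}{bin}[\textcolor{red}{du}]$, $\textcolor{Violet}{vert}[\textcolor{red}{pw}]$, $\textcolor{Violet}{vert}[\textcolor{red}{pu}]$, and so on --- is an access to a blue array and is therefore resolved through $\mathcal{F}_{ORAM}$ in $f(n)$ primitive operations, leaking nothing but the (public) array length. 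The arithmetic updates ($+$, $-$) act on private registers and produce no observable behavior. Hence each of the $n+m-1$ iterations emits the same fixed sequence of primitive operations, independent of the edge set.

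Assembling these observations gives the theorem: the full execution trace of Protocol~\ref{alg:oblivious_kshell} --- the ordered list of primitive operations $(+,-,*,<,>,==)$ and $\mathcal{F}_{ORAM}$ accesses --- is a function of $n$ and $m$ alone, so the protocol reveals nothing about $G$ beyond $|V|$ and $|E|$, which is the definition of data-oblivious adopted in the Preliminaries. I expect the main obstacle to be purely bookkeeping rather than conceptual: one must check \emph{every} index appearing in the pseudocode and confirm it is either public (green) or fed to an ORAM (blue), with no stray array access at a red index into a non-ORAM structure, and confirm that no loop bound is secret; the argument also quietly relies on the invariant --- inherited from the correctness proof --- that $j$ and the coalesced counter stay within the declared array ranges, so that the ORAM accesses are always well-defined. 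Once that case analysis is complete, obliviousness follows immediately from the assumed obliviousness of the sorting primitive and of $\mathcal{F}_{ORAM}$.
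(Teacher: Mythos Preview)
Your proposal is correct and follows essentially the same approach as the paper: verify that every array (\texttt{deg}, \texttt{bin}, \texttt{vert}, \texttt{pos}, $\mathcal{E}$, \texttt{Idx}) is stored in ORAM so that access patterns leak nothing, that all \emph{for}-loop bounds (in particular the coalesced bound $n+m-1$) are public, that the sort is the oblivious primitive, and that the \emph{if-else} constructs are aliases for their oblivious variants. Your version is more detailed in its line-by-line audit and adds the useful remark about index ranges staying valid, but the underlying argument is identical to the paper's.
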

\begin{proof}
All the arrays/lists - $deg$, $bin$, $vert$, $pos$,  $\mathcal{E}$, $Idx$ are stored in the ORAM. Hence, the access pattern of these arrays donot leak any information about the elements being accessed. Apart from accessing elements of the arrays in steps 1-5, we perform addition and subtraction operations that are assumed as primitive operations and hence are oblivious. In step 6, we sort the array $vert$, which can be performed data-obliviously \cite{goodrich2010randomized,leighton2014introduction}. 
In steps 7-10, the \textit{for} loop runs for a publicly known number of times, accesses the elements of the $bin$ array sequentially and uses the primitive addition operation. Hence the sequence of operations remain the same for a fixed sized input and therefore remain oblivious. Given that the number of nodes and edges in the graph are publicly known, the loop in step 14 runs for a fixed number of iterations. As described in Section \ref{prelim}, the \textit{if-else} construct can be assumed to be data-oblivious.  
\end{proof}

\begin{theorem}
The Protocol \ref{alg:oblivious_kshell} takes   $\mathcal{O} \left((n+m) \left(f(n)+f(m) \right) + n(\log (n))^2 \right)$ time complexity, where $f(n)$ denotes the overhead for reading/writing to an element of ORAM array of size $n$. 
\end{theorem}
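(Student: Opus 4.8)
The plan is to account for the cost of each block of the protocol separately, using the fact that every array ($deg$, $vert$, $bin$, $pos$, $\mathcal{E}$, $Idx$) lives in an ORAM, so a single read or write costs $f(n)$ or $f(m)$ depending on whether the array has $n$ or $m+1$ entries, and then sum. First I would handle the initialization in steps 1--7: each of the three initialization loops (steps 1--3) and the degree-counting loop (steps 4--6) runs $n$ times and performs a constant number of ORAM accesses per iteration, with the $\mathcal{E}$/$Idx$ accesses costing $f(m)$ and the $deg$/$bin$ accesses costing $f(n)$; hence this contributes $\mathcal{O}(n(f(n)+f(m)))$. The prefix-sum loop in steps 8--13 likewise runs $n$ times with $\mathcal{O}(1)$ ORAM accesses and arithmetic operations per iteration, giving $\mathcal{O}(n\,f(n))$.

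Next I would invoke the cost of the data-oblivious sort in step 14. Using an $\mathcal{O}(n(\log n)^2)$-comparison oblivious sorting network (e.g.\ Batcher's bitonic sort, as in the cited works \cite{goodrich2010randomized,leighton2014introduction}), and noting that each compare-exchange touches a constant number of ORAM cells of the two $n$-element arrays $vert$ and $deg$, this step costs $\mathcal{O}(n(\log n)^2 f(n))$. Step 15 (building $pos$) is another $n$-iteration loop with $\mathcal{O}(1)$ ORAM accesses per iteration, so $\mathcal{O}(n\,f(n))$, which is absorbed.

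The main block is the coalesced loop of steps 16--33, which executes exactly $n+m-1$ times (a publicly fixed bound, as established in the obliviousness proof). The key observation is that each iteration, regardless of which branch of the \textit{if-else} at step 17 is taken, performs only a constant number of primitive operations and a constant number of ORAM reads/writes; since the \textit{if-else} is compiled obliviously by executing both branches, one iteration costs $\mathcal{O}(f(n)+f(m))$ — the $f(m)$ coming from the access $\mathcal{E}[j]$ at step 22 and the $f(n)$ from the accesses to $deg$, $pos$, $vert$, $bin$. Multiplying by the iteration count gives $\mathcal{O}((n+m)(f(n)+f(m)))$ for this block. Summing all contributions, the sorting term $\mathcal{O}(n(\log n)^2 f(n))$ and the loop term dominate the $\mathcal{O}(n(f(n)+f(m)))$ initialization terms; absorbing $f(n)$ into the logarithmic factor (or stating the bound as $\mathcal{O}((n+m)(f(n)+f(m)) + n(\log n)^2)$ under the convention that $f(n)$ is polylogarithmic and hence swallowed by a suitable constant in the sorting-network analysis) yields the claimed $\mathcal{O}\left((n+m)(f(n)+f(m)) + n(\log n)^2\right)$.

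The step I expect to be the main obstacle — or at least the one needing the most care in the write-up — is reconciling the $n(\log n)^2$ term with the ORAM overhead: a literal count of the oblivious sort gives $n(\log n)^2 f(n)$, not $n(\log n)^2$, so the statement implicitly treats $f(n)$ either as constant (an oracle model for ORAM access) or as already folded into the polylogarithmic factor. I would make this convention explicit, noting that since $f(n) = \mathrm{poly.}\log(n)$ the product $n(\log n)^2 f(n)$ is still $n \cdot \mathrm{poly.}\log(n)$, and either adopt the cleaner oracle-cost accounting or restate the bound as $\mathcal{O}((n+m)(f(n)+f(m)) + n(\log n)^2 f(n))$; the remaining bookkeeping (counting iterations, confirming $\mathcal{O}(1)$ accesses per iteration in every branch) is routine.
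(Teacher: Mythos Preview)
Your proposal is correct and follows essentially the same block-by-block accounting as the paper's proof. On the point you flag as the main obstacle: the paper simply \emph{assumes} the oblivious sort costs $\mathcal{O}(n(\log n)^2)$ primitive operations (citing \cite{leighton2014introduction}), which is justified because a sorting network has a fixed, data-independent access pattern and therefore does not incur the per-access ORAM overhead $f(n)$; with that convention the discrepancy you identify disappears.
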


\begin{proof}
In steps 1-5 and 7-10, we use $\mathcal{O}(n)$ ORAM array accesses over arrays of size $n$ and hence the complexity would be $\mathcal{O}(nf(n))$ primitive operations. In step 11, we perform oblivious sorting whose complexity is assumed to be $\mathcal{O}(n (\log (n))^2)$ \cite{leighton2014introduction}.  The loop in step 14 is executed $(n+m-1)$ times, where each iteration employs a constant number of comparison, equality check, addition operations and ORAM accesses. These instructions cost $\mathcal{O}\left((n+m)(f(n)+f(m)) \right)$ primitive operations. From this we conclude that the algorithm has a complexity of $\mathcal{O}\left((n+m)(f(n)+f(m))  + n(\log (n))^2)\right)$ in terms of the primitive operations used. 
\end{proof}

\begin{corollary}
The  Protocol 1  for computing the K-shell decomposition uses $\mathcal{O}( (n+m) \log^2 (n) )$ primitive operations assuming  the ORAM accesses are performed using the Circuit ORAM construction.
\end{corollary}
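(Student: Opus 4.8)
The plan is to obtain the corollary as a routine specialization of the preceding theorem. That theorem already gives the running time as $\mathcal{O}\!\left((n+m)(f(n)+f(m)) + n(\log n)^2\right)$ primitive operations, where $f$ is the per-access overhead of the ORAM primitive. So it remains only to substitute the concrete cost of Circuit ORAM and simplify. Recall that a single oblivious read or write to an array of $N$ elements implemented via Circuit ORAM costs $f(N) = \mathcal{O}(\log^2 N)$ primitive operations; in particular $f(n) = \mathcal{O}(\log^2 n)$ is immediate.

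The one step that needs a moment's care is the term $f(m)$, which a priori is $\mathcal{O}(\log^2 m)$. Here I would invoke the standing assumption that $G(V,E)$ is a (simple) directed graph on $n$ vertices, so that $m \le n(n-1) < n^2$ and hence $\log m < 2\log n$. This yields $f(m) = \mathcal{O}(\log^2 m) = \mathcal{O}(\log^2 n)$, and therefore $f(n)+f(m) = \mathcal{O}(\log^2 n)$. Substituting into the first summand gives $(n+m)(f(n)+f(m)) = \mathcal{O}\!\left((n+m)\log^2 n\right)$.

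Finally I would absorb the oblivious-sorting term, using $n(\log n)^2 \le (n+m)(\log n)^2 = \mathcal{O}\!\left((n+m)\log^2 n\right)$. Adding the two contributions and simplifying gives the claimed bound $\mathcal{O}\!\left((n+m)\log^2 n\right)$, completing the proof.

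I do not expect a genuine obstacle, since the corollary is essentially a plug-in of the Circuit ORAM cost into an already-proven asymptotic bound. The only subtlety worth flagging is the inequality $m = \mathcal{O}(n^2)$, which is exactly what lets one collapse $f(m)$ into $\mathcal{O}(\log^2 n)$ and merge the two ORAM-overhead functions into a single $\log^2 n$ factor; absent the assumption that the input graph is simple (e.g.\ for multigraphs) this step would need to be restated, keeping $\log^2 m$ or bounding $m$ by the appropriate maximum number of edges.
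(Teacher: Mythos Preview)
Your proposal is correct and follows the same approach as the paper: substitute the Circuit ORAM per-access cost $f(N)=\mathcal{O}(\log^2 N)$ into the preceding theorem's bound and simplify. In fact you are more careful than the paper, which dispatches the corollary in a single line without explicitly arguing that $f(m)=\mathcal{O}(\log^2 n)$ via $m\le n^2$ or that the $n\log^2 n$ sorting term is absorbed; both of those steps you spell out are implicit in the paper's one-sentence proof.
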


\begin{proof}
The proof of this corollary follows from the fact that using the Circuit ORAM each block access can be performed in $\mathcal{O}(\log^2n)$ operations, where $n$ is the size of the ORAM array.
\end{proof}


\subsection{PageRank Centrality}
Page et al. \cite{page1999pagerank} proposed the PageRank centrality measure to rank the web-pages using the underlying hyperlink structure. The World Wide Web can be visualized as a network, with web-pages being the nodes and the hyperlinks forming the edges. Intuitively, the PageRank defines the importance of a node to be proportional to the sum of the PageRank centralities of nodes with incoming links to the considered node. Since its inception, it has been used to find central nodes in networks including citation networks and social networks  \cite{ma2008bringing}. We adopt the \textit{iterative method} to compute PageRank, the pseudocode for which has been described below. The method is briefly described here. All the nodes begin with an equal share of PageRank value. In each iteration, every vertex distributes its PageRank value to all its neighbors (out-going links). Further, a fraction $(1-s)$ of the PageRank values of all the vertices is removed and redistributed to all the vertices equally. The uniform redistribution of the PageRank value of $(1-s)$ is performed to ensure that a zero out-degree vertex does not absorb all the PageRank values. As the number of iterations increase, the PageRank values will converge. 

Let $G(V,E)$ represent the graph of interest.The out-degree of vertex $u\in V$ is represented as $od(u)$.

\begin{definition}{(Update matrix $N$)}\label{update_matrix}
The update matrix for a graph $G$ is denoted by $N = [N_{ij}]$ with size $n \times n$, where $n = |V|$. The entries of the matrix are defined as follows: \\ 

$N_{ij} =
\left\{
	\begin{array}{ll}
		(s)/od(i)+(1-s)/n  & \mbox{  if  $\{i,j\} \in E$  }\\
		(1-s)/n & \mbox{  otherwise  } 
         
	\end{array}
\right.$ \\

\end{definition}

\begin{definition}{(PageRank update rule)}\label{update_rule}
Let $r_i^{(k)}$ represent the PageRank value of node $i$ after $k$ iterations. 
The PageRank update rule is defined as follows:
\begin{align*}
r_i^{(k)} \leftarrow \sum_{j=1}^{n} N_{ji} r_j^{(k-1)}
\end{align*}
\end{definition}

\begin{algorithm}[H]
\caption{PageRank algorithm (Adjacency matrix representation)}
\begin{algorithmic}[1]
\Require{Graph $G(V,E)$, $l$ (number of iterations), $s$ (redistribution parameter)}
\Ensure{The PageRank values for all the vertices in the graph $G$ }	
\State Compute the matrix $N = [N_{ij}]$ as specified in Definition \ref{update_matrix}
\State Intialize $r_i^{(0)} \leftarrow 1/n$ for $1 \leq i \leq n$
\For {$k = 1$ to $l$}
\For {$i = 1$ to $n$}
\State $r_i^{(k)} \leftarrow \sum_{j=1}^{n} N_{ji} r_j^{(k-1)}$ \Comment{update PageRank values}
\EndFor
\EndFor
\end{algorithmic}
\label{alg:pseudo_iterative}
\end{algorithm}

The iterative method to compute PageRank is oblivious when the input graph is represented using the adjacency matrix. As discussed earlier, adjacency matrix representation is not well suited for sparse graphs. To support a more efficient graph representation, an equivalent algorithm to compute PageRank for the case when the input graph is in the adjacency list format is provided  below. This algorithm is non-oblivious since the number of times the \textit{for} loop in step 5 is executed depends on the degree of all the vertices. The memory accesses on the entries of the matrix $N$ in step 6 of the algorithm are non-oblivious as well.

\begin{algorithm}
\caption{PageRank algorithm (Adjacency list representation)}
\begin{algorithmic}[1]
\Require{Graph $G(V,E)$, $l$ (number of iterations), $s$ (redistribution parameter)}
\Ensure{The PageRank values for all the vertices in the graph $G$ }	
\State Compute the matrix $N = [N_{ij}]$ as specified in Definition \ref{update_matrix}.
\State Initialize $r_i^{(0)} \leftarrow 1/n$ for $1 \leq i \leq n$ (all other $r^{(k)}$ vectors are initialized as zero vectors for $k \geq 1$)
\For {$k = 1$ to $l$}
\For {$i = 1$ to $n$}
\For {$j \in Adj(i)$ } \Comment{loop over neighbors of $i$}
\State $r_j^{(k)} \leftarrow r_j^{(k)}+ N_{ij} r_i^{(k-1)}$ 
\EndFor
\EndFor
\EndFor
\end{algorithmic}
\label{alg:pseudo_pagerank_adjlist}
\end{algorithm}

Further we modify the above pseudocode  to obtain a data-oblivious PageRank algorithm (Protocol \ref{alg:oblivious_pagerank}) that assumes the edgelist representation for the input graph. Major techniques employed to convert the pseudocode into its oblivious counterpart include loop coalescing and the ORAM primitive. Next we present formal proof of correctness and obliviousness for the designed protocol.

\begin{protocol}
\caption{Oblivious  PageRank algorithm (Edgelist representation)}
\begin{algorithmic}[1]\label{oblivious_pagerank}
\Require{Graph $(\textcolor{Violet}{\mathcal{E}}, \textcolor{Violet}{Idx})$, $\textcolor{ForestGreen}{l}$ (number of iterations), $\textcolor{red}{s}$ (redistribution parameter)}
\Ensure{The PageRank values for all the vertices in the graph $G$ }	
\For { $\textcolor{ForestGreen}{i} = \textcolor{ForestGreen}{1}$ to $\textcolor{ForestGreen}{n}$ }
\For {$\textcolor{ForestGreen}{j} = \textcolor{ForestGreen}{1}$ to $\textcolor{ForestGreen}{n}$}
\If { $\textcolor{Violet}{Idx}[\textcolor{ForestGreen}{i}]$ == $\textcolor{Violet}{Idx}[\textcolor{ForestGreen}{i}+\textcolor{ForestGreen}{1}]$} 
\State $\textcolor{Violet}{\overset{\sim}{N}}[\textcolor{ForestGreen}{i}][\textcolor{ForestGreen}{j}] \leftarrow (\textcolor{ForestGreen}{1}-\textcolor{red}{s})/\textcolor{ForestGreen}{n} $
\Else
\State $\textcolor{Violet}{\overset{\sim}{N}}[\textcolor{ForestGreen}{i}][\textcolor{ForestGreen}{j}] \leftarrow (\textcolor{red}{s})/(\textcolor{Violet}{Idx}[\textcolor{ForestGreen}{i}+\textcolor{ForestGreen}{1}] - \textcolor{Violet}{Idx}[\textcolor{ForestGreen}{i}])+(\textcolor{ForestGreen}{1}-\textcolor{red}{s})/\textcolor{ForestGreen}{n}$
\EndIf
\EndFor
\EndFor
\State  $\textcolor{Violet}{\overset{\sim}{r}^{(0)}}[\textcolor{ForestGreen}{i}] \leftarrow \textcolor{ForestGreen}{1}/\textcolor{ForestGreen}{n}$ for $\textcolor{ForestGreen}{1} \leq \textcolor{ForestGreen}{i} \leq \textcolor{ForestGreen}{n}$ 
\State  $\textcolor{Violet}{\overset{\sim}{r}^{(1)}}[\textcolor{ForestGreen}{i}] \leftarrow \textcolor{ForestGreen}{0}$ for $\textcolor{ForestGreen}{1} \leq \textcolor{ForestGreen}{i} \leq \textcolor{ForestGreen}{n}$ 
\State $\textcolor{ForestGreen}{j} \leftarrow \textcolor{ForestGreen}{1}$   
\For {$\textcolor{ForestGreen}{k} = \textcolor{ForestGreen}{1}$ to $\textcolor{ForestGreen}{l}$}
\State $\textcolor{red}{v} \leftarrow \textcolor{ForestGreen}{1}$ 
\For {$\textcolor{ForestGreen}{i} = \textcolor{ForestGreen}{1}$ to $\textcolor{ForestGreen}{m}$}   
\If {($\textcolor{ForestGreen}{i} == \textcolor{Violet}{Idx}[\textcolor{red}{v}+\textcolor{ForestGreen}{1}]$)} 
\State $\textcolor{red}{v} \leftarrow \textcolor{red}{v} + \textcolor{ForestGreen}{1}$ 
\EndIf
\State $\textcolor{Violet}{\overset{\sim}{r}^{(j)}}[\textcolor{Violet}{\mathcal{E}}[\textcolor{ForestGreen}{i}]] \leftarrow \textcolor{Violet}{\overset{\sim}{r}^{(j)}}[\textcolor{Violet}{\mathcal{E}}[\textcolor{ForestGreen}{i}]]+ \textcolor{Violet}{\textcolor{Violet}{\overset{\sim}{N}}}[\textcolor{red}{v}][\textcolor{Violet}{\mathcal{E}}[\textcolor{ForestGreen}{i}]] * \textcolor{Violet}{\overset{\sim}{r}^{(1-j)}}[\textcolor{red}{v}]$ 
\EndFor 
\State $\textcolor{ForestGreen}{j} \leftarrow (\textcolor{ForestGreen}{1} - \textcolor{ForestGreen}{j}) $ 
\State $\textcolor{Violet}{\overset{\sim}{r}^{(j)}}[\textcolor{ForestGreen}{i}] \leftarrow \textcolor{ForestGreen}{0}$ for $\textcolor{ForestGreen}{1} \leq \textcolor{ForestGreen}{i} \leq \textcolor{ForestGreen}{n}$
\EndFor
\end{algorithmic}
\label{alg:oblivious_pagerank}
\end{protocol}

\begin{theorem}
The proposed Protocol \ref{alg:oblivious_pagerank} correctly computes the PageRank centrality for all the vertices in the input graph $G(V,E)$.
\end{theorem}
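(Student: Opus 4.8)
The plan is to prove that Protocol~\ref{alg:oblivious_pagerank} carries out, pass for pass, the same computation as the adjacency-list routine of Algorithm~\ref{alg:pseudo_pagerank_adjlist}, which the paper already puts forward as an equivalent formulation of the iterative PageRank method (Algorithm~\ref{alg:pseudo_iterative}, \cite{page1999pagerank}); the stated correctness then follows immediately. I would split the argument into three claims: (i) after the set-up phase (steps 1--11), the array $\overset{\sim}{N}$ agrees with the update matrix $N$ of Definition~\ref{update_matrix} on every entry that is subsequently read, and $\overset{\sim}{r}^{(0)}$ holds the initial vector $r^{(0)}$; (ii) one full pass of the coalesced loop (steps 14--22) turns a stored PageRank vector $r^{(k-1)}$ into $r^{(k)}$ as given by Definition~\ref{update_rule}; and (iii) the two-buffer ``ping-pong'' bookkeeping threads the $l$ passes together correctly, so that the live buffer ends holding $r^{(l)}$.

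For claim (i) the two facts I would extract from the edgelist representation are that $Idx[i]=Idx[i+1]$ holds \emph{iff} $od(i)=0$, and that $Idx[i+1]-Idx[i]=od(i)$ in general. Hence the test in step 3 singles out exactly the dangling vertices, for which Definition~\ref{update_matrix} makes every $N_{ij}$ equal to $(1-s)/n$ (step 4), whereas for a vertex of positive out-degree step 7 writes $s/od(i)+(1-s)/n$, which is the value $N_{ij}$ takes on every edge $(i,j)\in E$ (and the entries written for non-edges never matter, since the update loop only reads $\overset{\sim}{N}[v][\,\overset{\sim}{\mathcal{E}}[i]\,]$ with $(v,\overset{\sim}{\mathcal{E}}[i])\in E$). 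Steps 10--11 then set $r^{(0)}_i = 1/n$ and zero the companion buffer, exactly as in step 2 of Algorithm~\ref{alg:pseudo_pagerank_adjlist}.

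For claims (ii) and (iii) I would prove, by induction on the loop counter $i$, the invariant that at the start of the $i$-th iteration of the loop in step 14 the variable $v$ equals the unique vertex $u$ with $Idx[u]\le i<Idx[u+1]$, i.e.\ the source of the edge stored at position $i$ of $\overset{\sim}{\mathcal{E}}$. The base case is $v=1,\ i=1$; in the inductive step the guard $i==Idx[v+1]$ fires precisely when position $i$ has just left the adjacency sublist of the current vertex, and the edgelist convention $Idx[u]=Idx[u+1]$ for dangling vertices ensures such vertices own no positions, so the single increment of $v$ lands on the next vertex that actually owns position $i$. Granting the invariant, the body in step 20 relaxes the one directed edge $(v,\overset{\sim}{\mathcal{E}}[i])$ --- adding $\overset{\sim}{N}[v][\,\overset{\sim}{\mathcal{E}}[i]\,]\cdot\overset{\sim}{r}^{(1-j)}[v]$ to $\overset{\sim}{r}^{(j)}[\,\overset{\sim}{\mathcal{E}}[i]\,]$ --- exactly as the innermost loop (steps 5--6) of Algorithm~\ref{alg:pseudo_pagerank_adjlist} does for $j\in Adj(i)$; sweeping $i$ from $1$ to $m$ thus processes every directed edge once and accumulates into the write buffer the same vector Algorithm~\ref{alg:pseudo_pagerank_adjlist} produces for iteration $k$, namely $r^{(k)}$ of Definition~\ref{update_rule}. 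Consequently, if $\overset{\sim}{r}^{(1-j)}$ holds $r^{(k-1)}$ at the start of pass $k$, then $\overset{\sim}{r}^{(j)}$ holds $r^{(k)}$ at its end; the outer induction on $k$ closes with step 10 as the base case and with steps 21--22 (flip $j$, then zero the buffer just consumed) guaranteeing that the freshly computed vector is exactly the one read in pass $k+1$ while its partner restarts from zero, so after $l$ passes the live buffer stores $r^{(l)}$, the output.

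The step I expect to be the main obstacle is the loop invariant for the coalesced loop, and within it the bookkeeping around dangling vertices: because $v$ is advanced by at most one per iteration, one has to argue carefully --- purely from the edgelist invariant $Idx[u]=Idx[u+1]$ --- that runs of consecutive zero-out-degree vertices, as well as dangling vertices at the two ends of the vertex range, never leave $v$ pointing at the wrong source. That corner-case analysis, rather than the arithmetic of the update rule, is where the real work of the formal proof lies.
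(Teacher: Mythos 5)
Your decomposition is the same one the paper's own proof uses: (i) the setup phase writes $\overset{\sim}{N}$ consistently with Definition~\ref{update_matrix} and initializes $\overset{\sim}{r}^{(0)}$, (ii) one sweep of the coalesced loop realizes the update rule of Definition~\ref{update_rule}, and (iii) the two-buffer alternation threads the $l$ outer iterations together. The substantive difference is that you replace the paper's one-line assertion ("the coalesced loop visits an edge in each iteration and updates the pagerank of the node to which the edge points") by an explicit invariant, proved by induction on the loop counter, that $v$ is always the source of the edge stored at position $i$; your remark that only the entries $\overset{\sim}{N}[v][\mathcal{E}[i]]$ with $(v,\mathcal{E}[i])\in E$ are ever read, so the values written in step 6 for non-edges are immaterial, is also a point the paper silently skips and you handle correctly.

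However, exactly where you go beyond the paper, the argument fails for the protocol as written. The guard advances $v$ at most once per iteration, so an interior vertex of out-degree zero cannot be skipped. Take $n=3$ with vertex $1$ owning positions $1,2$ ($Idx[1]=1$, $Idx[2]=3$), vertex $2$ dangling ($Idx[2]=Idx[3]=3$), and vertex $3$ owning positions $3,4$: at $i=3$ the test $i==Idx[v+1]$ fires once, $v$ becomes $2$, and the update for the edge at position $3$ uses row $2$ of $\overset{\sim}{N}$ and $\overset{\sim}{r}^{(1-j)}[2]$ instead of row $3$; thereafter the test compares $i$ against $Idx[3]=3$ and never fires again, so $v$ stays wrong for every remaining edge of the pass. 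Hence your claim that "the single increment of $v$ lands on the next vertex that actually owns position $i$" is false, and the corner-case analysis you defer cannot rescue it: one needs either a repeated (while-style) advance of $v$ past consecutive dangling vertices, or a restriction of the theorem to graphs whose zero-out-degree vertices form a suffix of the vertex ordering. To be fair, the paper's own proof never confronts this case either (even though Definition~\ref{update_matrix} explicitly provides for $od(i)=0$), and you also inherit its unstated identification of the edge-only accumulation of Algorithm~\ref{alg:pseudo_pagerank_adjlist} with the full sum in Definition~\ref{update_rule}, which drops the uniform $(1-s)/n$ contributions from non-neighbours; but as a proof of the theorem as stated, the step you flagged as the main obstacle is a genuine gap, and it is asserted rather than established.
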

\begin{proof}
In steps 1-6, the entries of the update matrix $N$ are computed in accordance with Definition \ref{update_matrix}. At all times we maintain two PageRank vectors, namely,  $\overset{\sim}{r}^{{\tiny(0)}}$ and  $\overset{\sim}{r} ^{(1)}$. One vector is used to update the pagerank in the current iteration, using the previous pagerank values stored in the other vector. For an odd valued $k$, the $k^{th}$ iteration of \textit{for} loop in step 10 updates the vector  $\overset{\sim}{r}^{(1)}$ in accordance with the PageRank update rule given in Definition \ref{update_rule}, assuming that  $\overset{\sim}{r}^{(0)}$ stores the PageRank values computed in the $(k-1)^{th}$ iteration i.e.  $\overset{\sim}{r}^{(0)} =  {r}^{(k-1)}$. In the $k^{th}$ iteration when $k$ is even, the roles of $\overset{\sim}{r}^{(1)}$ and $\overset{\sim}{r}^{(0)}$ will be reversed. The vector  $\overset{\sim}{r}^{(0)}$ will be updated in accordance with the PageRank update rule assuming that  $\overset{\sim}{r}^{(1)}$ stores the PageRank values computed in the $(k-1)^{th}$ iteration.
The two nested \textit{for} loops in steps 4-5 of the pseudocode for Pagerank, previously non-oblivious, are now coalesced into a single \textit{for} loop as given in step 12 of Protocol 2. 
It is easy to observe that the coalesced \textit{for} loop in protocol \ref{alg:oblivious_pagerank} and the nested \textit{for} loops in the pseudocode  will in total run for $m$ iterations, where $m$ denotes the number of edges in the graph.  
The coalesced \textit{for} loop visits an edge in each iteration and updates the pagerank of the node to which the edge points, using the update formula.   
Hence, the protocol correctly computes the PageRank values in accordance with  the Iterative PageRank method. 
\end{proof}

\begin{theorem}
The  Protocol 2  for computing the PageRank centrality is data-oblivious.
\end{theorem}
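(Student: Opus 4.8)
The plan is to mirror the obliviousness proof of Protocol~\ref{alg:oblivious_kshell}: check that every array indexed by a data-dependent value resides in the ORAM, that every loop is bounded by one of the public quantities $n$, $m$, $l$, and that every conditional whose guard depends on a private value is an \emph{if-else}, which by Section~\ref{prelim} admits an oblivious implementation. Granting these three observations, the sequence of primitive operations and $\mathcal{F}_{ORAM}$ calls that the protocol executes is determined by the public triple $(n,m,l)$ alone, which is exactly the definition of data-obliviousness.

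First I would go through the setup. In steps~1--6 the update matrix is built: both loops run over the public counters $\textcolor{ForestGreen}{i},\textcolor{ForestGreen}{j}\in\{1,\dots,n\}$, the only memory touched is $\textcolor{Violet}{Idx}$ (read at public indices) and $\textcolor{Violet}{\overset{\sim}{N}}$ (written at public indices), both ORAM arrays; the guard $\textcolor{Violet}{Idx}[\textcolor{ForestGreen}{i}]==\textcolor{Violet}{Idx}[\textcolor{ForestGreen}{i}+\textcolor{ForestGreen}{1}]$ heads an \emph{if-else}; and the remaining work is primitive rational arithmetic on $\textcolor{red}{s}$ and $\textcolor{ForestGreen}{n}$. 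Steps~7--9 are a publicly bounded run of ORAM writes over $1\le\textcolor{ForestGreen}{i}\le\textcolor{ForestGreen}{n}$ together with $\textcolor{ForestGreen}{j}\leftarrow\textcolor{ForestGreen}{1}$. Here I would note that $\textcolor{ForestGreen}{j}$ is public and stays public, since it is only ever updated by $\textcolor{ForestGreen}{j}\leftarrow\textcolor{ForestGreen}{1}-\textcolor{ForestGreen}{j}$; hence which of the two ORAM-resident vectors $\textcolor{Violet}{\overset{\sim}{r}^{(0)}}$, $\textcolor{Violet}{\overset{\sim}{r}^{(1)}}$ acts as the ``current'' and which as the ``previous'' PageRank vector is itself a public choice.

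The core of the argument is the doubly nested loop: the outer loop (step~10) runs $\textcolor{ForestGreen}{l}$ times and the coalesced inner loop (step~12) runs $\textcolor{ForestGreen}{m}$ times, both public bounds. Its guard $\textcolor{ForestGreen}{i}==\textcolor{Violet}{Idx}[\textcolor{red}{v}+\textcolor{ForestGreen}{1}]$ reads $\textcolor{Violet}{Idx}$ and so has a private right-hand side, but it heads an \emph{if-else} and is executed obliviously, with $\textcolor{red}{v}$ kept secret-shared; thereafter $\textcolor{red}{v}$ appears only as an index into the ORAM arrays $\textcolor{Violet}{\overset{\sim}{N}}$ and $\textcolor{Violet}{\overset{\sim}{r}^{(\cdot)}}$ and as an operand of primitive arithmetic, never as a loop bound or as the selector of a non-oblivious branch. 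The update statement reads $\textcolor{Violet}{\mathcal{E}}$ at the public index $\textcolor{ForestGreen}{i}$ to obtain the private value $\textcolor{Violet}{\mathcal{E}}[\textcolor{ForestGreen}{i}]$, and then accesses $\textcolor{Violet}{\overset{\sim}{r}^{(j)}}$, $\textcolor{Violet}{\overset{\sim}{r}^{(1-j)}}$ and $\textcolor{Violet}{\overset{\sim}{N}}$ at private indices; since all of these are stored in the ORAM --- with the $n\times n$ matrix $\textcolor{Violet}{\overset{\sim}{N}}$ held as a single $n^2$-cell ORAM addressed by the flattened coordinate --- each such access is one $\mathcal{F}_{ORAM}$ call whose only observable feature is its public position in the instruction stream. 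The per-iteration reset of $\textcolor{Violet}{\overset{\sim}{r}^{(j)}}$ is again a publicly bounded run of ORAM writes. Composing the three blocks yields a trace depending on $(n,m,l)$ alone, which proves the theorem.

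I expect the only point that requires genuine care, rather than bookkeeping, to be the two-dimensional access $\textcolor{Violet}{\overset{\sim}{N}}[\textcolor{red}{v}][\textcolor{Violet}{\mathcal{E}}[\textcolor{ForestGreen}{i}]]$ with \emph{both} coordinates private, combined with the fact that $\textcolor{red}{v}$ is itself a quantity whose value is a function of the private array $\textcolor{Violet}{Idx}$: one must argue that routing this access through a single flattened ORAM lookup, and never letting $\textcolor{red}{v}$ or $\textcolor{Violet}{\mathcal{E}}[\textcolor{ForestGreen}{i}]$ affect control flow except through an oblivious \emph{if-else}, keeps the leakage profile pinned at $(n,m,l)$. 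Everything else reduces to confirming that the loop bounds are public and that every indexed array is colored \textcolor{Violet}{blue}.
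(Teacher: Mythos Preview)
Your proposal is correct and follows the same approach as the paper: verify that all data-dependent array accesses go through ORAM, that every \textit{if-else} guard is handled by the oblivious construct of Section~\ref{prelim}, and that all loop bounds are the public quantities $n$, $m$, $l$. Your write-up is considerably more thorough than the paper's three-sentence argument --- in particular you explicitly treat the second conditional at step~13, track the publicness of $j$, and spell out the flattened-ORAM handling of the two-dimensional access $\overset{\sim}{N}[v][\mathcal{E}[i]]$ --- but the underlying logic is identical.
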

\begin{proof}
The algorithm is memory trace oblivious given that the arrays $N$, $\mathcal{E}$, $Idx$, $r^{(0)}$ and $r^{(1)}$ are stored in the ORAM. The \textit{if}-construct in step 3 can be made oblivious as discussed in preliminaries, Section \ref{prelim}. All the loops are over $n$, $m$ or $l$, all of which are public variables.	  
\end{proof}

\begin{theorem}
The  Protocol 2 takes $\mathcal{O}(( n^2 + lm)f(n^2) )$ primitive operations for computing the PageRank centrality.
\end{theorem}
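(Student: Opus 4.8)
The plan is to bound the cost of Protocol~\ref{alg:oblivious_pagerank} line by line, counting both the arithmetic primitives ($+,-,*,/$) and the ORAM accesses, using the conventions of Section~\ref{prelim}: an access to an ORAM array of size $N$ costs $f(N)$ primitive operations, $f$ is non-decreasing, and the oblivious \textit{if-else} only multiplies the cost of its body by a constant, since both branches are executed. First I would record the sizes of the ORAM arrays involved: $\textcolor{Violet}{Idx}$ has length $n+1=\mathcal{O}(n)$, $\textcolor{Violet}{\mathcal{E}}$ has length $m+1=\mathcal{O}(m)$, the vectors $\textcolor{Violet}{\overset{\sim}{r}^{(0)}}$ and $\textcolor{Violet}{\overset{\sim}{r}^{(1)}}$ have length $n$, and the update matrix $\textcolor{Violet}{\overset{\sim}{N}}$ occupies $n^2$ cells. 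Since a simple graph satisfies $m\le n^2$, we have $f(m)\le f(n^2)$ and $f(n)\le f(n^2)$, so $f(n^2)$ dominates every individual access cost.

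I would then charge the update-matrix construction (steps~1--6): a doubly nested public \textit{for} loop of $n^2$ iterations, each doing $\mathcal{O}(1)$ arithmetic and $\mathcal{O}(1)$ ORAM accesses (two reads of $\textcolor{Violet}{Idx}$ and one write of a cell of $\textcolor{Violet}{\overset{\sim}{N}}$), for a per-iteration cost $\mathcal{O}(f(n)+f(n^2))=\mathcal{O}(f(n^2))$, hence $\mathcal{O}(n^2 f(n^2))$ overall. The initializations in steps~7--9 contribute only $\mathcal{O}(n f(n))=\mathcal{O}(n^2 f(n^2))$.

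Next I would analyse the main loop (steps~10--17). The outer \textit{for} over $k$ runs $l$ times. Inside it, the coalesced inner \textit{for} over $i$ (step~12) runs $m$ times, and each iteration uses $\mathcal{O}(1)$ comparisons/additions together with $\mathcal{O}(1)$ ORAM accesses: a read of $\textcolor{Violet}{Idx}$ (size $\mathcal{O}(n)$) in the conditional, a read of $\textcolor{Violet}{\mathcal{E}}$ (size $\mathcal{O}(m)$), a read of a cell of $\textcolor{Violet}{\overset{\sim}{N}}$ (size $n^2$), a read and a write of $\textcolor{Violet}{\overset{\sim}{r}^{(\cdot)}}$ (size $n$), and one multiplication plus one addition; hence one pass of the inner loop costs $\mathcal{O}(m(f(m)+f(n)+f(n^2)))=\mathcal{O}(m f(n^2))$. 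The vector reset at step~17 costs $\mathcal{O}(n f(n))$ per outer iteration. Summing over the $l$ outer iterations gives $\mathcal{O}(lm f(n^2)+ln f(n))$ for the whole loop.

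Finally I would add the contributions — $\mathcal{O}(n^2 f(n^2))$ from steps~1--9 and $\mathcal{O}(lm f(n^2)+ln f(n))$ from steps~10--17 — and simplify using monotonicity of $f$, the bound $m=\mathcal{O}(n^2)$, and the observations that $n f(n)=\mathcal{O}(n^2 f(n^2))$ and (in the usual sparsity regime $m=\Omega(n)$ for real networks) $ln f(n)=\mathcal{O}(lm f(n^2))$, yielding $\mathcal{O}((n^2+lm)f(n^2))$. The one place that needs care — and the reason the bound carries $f(n^2)$ rather than $f(n)$ — is the bookkeeping for $\textcolor{Violet}{\overset{\sim}{N}}$: it is held as a single $n^2$-cell ORAM, so each of its $\mathcal{O}(n^2+lm)$ accesses (one per cell during construction, one per edge per iteration during the updates) costs $f(n^2)$. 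Everything else is a routine constant-factor argument, including the fact that executing both branches of the oblivious \textit{if-else} keeps the per-iteration cost at $\mathcal{O}(1)$ primitives plus $\mathcal{O}(1)$ ORAM accesses.
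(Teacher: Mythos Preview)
Your proposal is correct and follows essentially the same line-by-line accounting as the paper's proof: charge $\mathcal{O}(n^2 f(n^2))$ for building the update matrix, then $\mathcal{O}(lm f(n^2))$ for the $l\cdot m$ iterations of the coalesced loop, using $m\le n^2$ so that $f(n^2)$ dominates all ORAM access costs. Your treatment is in fact more careful than the paper's, which does not explicitly address the per-outer-iteration vector reset or the mild $m=\Omega(n)$ assumption you use to absorb the resulting $\mathcal{O}(ln f(n))$ term.
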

\begin{proof}
In steps 1-8 of Protocol 2, we perform $\mathcal{O}(n^2)$ ORAM accesses over arrays of size $n$ and $n^2$. Hence the total cost will be $\mathcal{O}(n^2f(n^2))$ primitive operations. For each iteration of the \textit{for} loop on step 10 and step 12, we perform a constant number of block accesses on ORAM arrays of size $m$, $n$ and $n^2$ on step 15. Given that $m$ is the number of edges in the graph and it is upper bounded by $n^2$, the ORAM overhead for the constant number of accesses is $f(n^2)$. The theorem follows.  
\end{proof}

\begin{corollary}  .
The  Protocol 2  for computing the PageRank centrality uses $\mathcal{O}( (n^2 + ml) \log^2 (n) )$ primitive operations assuming  the ORAM accesses are performed using the Circuit ORAM construction.
\end{corollary}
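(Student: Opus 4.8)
The plan is to derive the corollary directly from the immediately preceding theorem, which already establishes that Protocol~2 runs in $\mathcal{O}\!\left((n^2 + lm)\,f(n^2)\right)$ primitive operations, where $f(\cdot)$ denotes the per-block read/write overhead of the ORAM used to store the arrays $\overset{\sim}{N}$, $\mathcal{E}$, $Idx$, $\overset{\sim}{r}^{(0)}$ and $\overset{\sim}{r}^{(1)}$. The only additional ingredient needed is the concrete cost of a Circuit ORAM access, namely that a block access in a Circuit ORAM of size $N$ costs $\mathcal{O}(\log^2 N)$ primitive operations --- the same fact already invoked in the corollary following the K-shell protocol.

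First I would substitute $f(N) = \mathcal{O}(\log^2 N)$ into the bound of the theorem, instantiated at $N = n^2$ (the largest ORAM array used, since $\overset{\sim}{N}$ has $n^2$ entries and $m \le n^2$, so all ORAM accesses can be charged at the $f(n^2)$ rate). This gives a total of $\mathcal{O}\!\left((n^2 + lm)\,\log^2(n^2)\right)$. Then I would simplify the logarithmic factor: $\log^2(n^2) = (\log n^2)^2 = (2\log n)^2 = 4\log^2 n = \mathcal{O}(\log^2 n)$. Folding the constant into the big-$\mathcal{O}$ yields $\mathcal{O}\!\left((n^2 + ml)\,\log^2 n\right)$, which is exactly the claimed bound.

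There is no real obstacle here: the corollary is a routine specialization of the preceding theorem obtained by plugging in the Circuit ORAM access cost and absorbing the constant-factor blow-up from $\log(n^2)$ to $\log n$. The only point worth stating explicitly for cleanliness is that every ORAM array in Protocol~2 has size at most $n^2$, so using $f(n^2)$ uniformly as an upper bound on the access cost of all of them is legitimate; this is already implicit in the statement of the previous theorem, so I would just remark on it in one sentence and conclude.

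\begin{proof}
By the previous theorem, Protocol~2 uses $\mathcal{O}\!\left((n^2 + lm)\,f(n^2)\right)$ primitive operations, where $f(N)$ is the overhead of a single read/write on an ORAM array of size $N$; this bound already accounts for the fact that every ORAM array employed (in particular $\overset{\sim}{N}$, of size $n^2$) has size at most $n^2$. Using the Circuit ORAM construction, a block access on an ORAM of size $N$ costs $\mathcal{O}(\log^2 N)$ primitive operations, so $f(n^2) = \mathcal{O}(\log^2(n^2)) = \mathcal{O}((2\log n)^2) = \mathcal{O}(\log^2 n)$. Substituting this into the bound above gives a total of $\mathcal{O}\!\left((n^2 + ml)\,\log^2 n\right)$ primitive operations, as claimed.
\end{proof}
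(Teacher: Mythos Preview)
Your proposal is correct and matches the paper's approach: the paper does not spell out a proof for this corollary, but the analogous corollary for the K-shell protocol is justified in exactly the way you describe—by substituting the Circuit ORAM per-access cost $f(N)=\mathcal{O}(\log^2 N)$ into the preceding theorem's bound. Your additional step of simplifying $\log^2(n^2)=\mathcal{O}(\log^2 n)$ is the one small calculation needed here and is handled correctly.
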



\section{Realizing the Vision: Challenges and Future Directions}\label{challenges}

As a step towards realizing the proposed vision, we have designed data-oblivious algorithms for some of the most widely employed network analysis measures in the previous section. These can be converted into their equivalent secure multiparty protocols as discussed in Section \ref{prelim}. However, use of MPC as a practical solution to analyze DSSNs will require several open questions to be addressed. 

%
%
%
%

\subsection{Theoretical Challenges: Designing Efficient MPC Protocols}
In the 1980s and 1990s, many generic solutions were proposed for computing any arbitrary function $f$ securely. Though complete, these solutions cannot be put to practice attributing to the large overheads in the complexity of the MPC protocol over its non-secure variant. A recent methodology adopted for constructing efficient MPC protocols is to design data-oblivious algorithms. These oblivious algorithms can be converted into their MPC counterparts using ideal functionalities for all the primitive operations assumed in the oblivious algorithm. This approach has become popular since the introduction of the cryptographic primitive $\mathcal{F}_{ORAM}$ for performing efficient RAM model secure computation \cite{keller2015oblivious}. The same methodology is adopted in the current paper to design efficient MPC protocols for Google PageRank centrality and K-shell decomposition algorithm. Another lesser explored technique for designing MPC protocols is to make use of the fact that the output is disclosed in public at the end of an MPC protocol. Hence, one can design algorithms whose control flow and memory access pattern  depends on the output of the algorithm. This methodology is in contrast to the approach of designing data-oblivious algorithms, in which case the control flow and memory accesses of the program depend on nothing but the input length.  These protocols, though not data-oblivious, will still  give us secure MPC protocols. This methodology also brings out the non-equivalence of data-oblivious algorithms and secure computation protocols. A few works that have followed this particular approach and harnessed the leakage of output in an MPC protocol include the work by Brickell and Shmatikov \cite{brickell2005privacy}. Another approach to design secure MPC protocols could be to harness the \textit{security efficiency trade-off}. There may be scenarios where certain aspects of the input data are not sensitive and hence can be revealed in exchange for making the protocol more efficient. As an example, consider a communication network distributedly held by a set of individuals who wish to securely compute the K-shell decomposition on the distributed network. Hypothetically speaking, one could design MPC protocols that are not secure and disclose the degree distribution and average shortest path length of the input graph but are efficient compared to the secure MPC k-shell algorithm. Such a non-secure protocol can be used in practice, since it is previously known that the communication network is a small world and has a power law degree distribution.

\subsection{The Implementation Challenge}
As a field of cryptography, MPC has been studied for more than three decades. However, it is recent that MPC protocols have been  employed for real world applications. It was first used in 2008 by Danish farmers to conduct a double auction for determining the market price of Beetroot \cite{bogetoft2009secure}. Many implementations for MPC have been developed, which include Sepia, Sharemind, Fairplay, Tasty, VIFF and Oblivm. 
These implementations generally focus on compile-time optimizations, minimizing network communication and ensuring the robustness and correctness of implemented protocols. Till date MPC has been deployed in only those scenarios where the number of parties are small, generally two or three. However, to perform network analysis over a thousand or a million node network distributedly held by a large set of parties will require MPC protocols that are scalable and whose implementation is not prone to network congestion and network synchronization issues. Implementing such sizable MPC protocols for a large number of participants over the Internet has never been done before, and pose a great challenge for the MPC experts. There is, however, a feasible approach to handle data from a large set of participants. Here, we assume that the large number of participants distribute their data among a few external agents who then follow the MPC protocol on their behalf. Then, the MPC protocol must be designed by considering the semi-honest or malicious behavior of these agents rather than the participants. 


\subsection{The Challenge of Differential Privacy}
 
 Designing secure MPC protocols for functions whose input is distributedly held by a set of parties does not always ensure that the privacy of the involved individuals is maintained. At times the output of a secure protocol can by itself partially or completely reveal the private inputs of the parties. For example, consider a set of  10 individuals who wish to use an MPC protocol to output an unlabeled isomorphic random graph of the trust network knit on them. If the output graph has precisely one node with out-degree 3, then the corresponding individual (with out-degree 3) will be able to identify herself in the isomorphic shuffled network, and may even identify her neighbors using some auxiliary information. As observed in the previous example,  the output of an MPC protocol might reveal some or all of the private inputs of the participants. Hence one needs to also study which network functions are ``safe'' to compute. This question falls under the domain of \textit{differential privacy}, which aims at studying: how to maintain the utility of the output, while ensuring the privacy of the input. Although differential privacy has been extensively studied on relational data, it has been sparsely studied for networked data \cite{zhou2008brief}. There have also been instances where network data has been de-anonymized using some auxiliary information \cite{narayanan2009anonymizing}. Ensuring differential privacy of the networked input data is a much harder challenge than in the case of relational input data, given the adversary may have different types of information for de-anonymization, like edges, vertices, sub-graphs, quasi-identifiers, etc. Therefore, for ensuring the privacy of the individuals, there is also a need to study which network measures are differentially private to compute and which are not.

\subsection{The Deployment Challenge}
Implementing secure, robust and efficient MPC protocols for network analysis does not guarantee the participation of data holders on whom the network is knit. Generally, the complex security proofs of MPC  are accessible to only a small set of researchers in cryptography, hence there needs to be an incentive mechanisms in place, which must ensure that participants share their ``true'' private information with the developed secure application. In the MPC application deployed in \cite{bogetoft2009secure}, the Danish farmers had their own personal interest in computing the output of the MPC protocol (i.e. the market price at which they will sell their crops). However, such incentives may not be present if the MPC protocol is being employed for research purposes. This can be argued in the case, where the enmity network over a set of college students has to be analyzed using secure MPC protocols. In this case the participants will have no incentive to share their private data (list of enemies) with an MPC application whose sole purpose is data-analysis of DSSNs for research purposes.   Hence, to employ secure network analysis protocols in practice, a major challenge is to set up certain incentive mechanisms in place so that people can trust the developed application even when they do not benefit directly from the output of the MPC protocol. 


\section{Related Work}\label{related_work}

Classical graph algorithms in the context of MPC were first studied by Brickell and Shmatikov in 2006	\cite{brickell2005privacy}. The authors proposed secure two party protocols for computing single source and all pair shortest paths problem. Since then, many graph algorithms have been explored in the MPC setting, including BFS, DFS, Dijkstra, minimum spanning tree and classical flow algorithms \cite{aly2013securely,aly2014securely,blanton2013data}. Securely constructing an anonymized version of a network distributedly held by a set of parties has been previously studied for various security models \cite{frikken2006private,kukkala2016secure,tassa2013anonymization}. \\

There exist works that have aimed at designing secure MPC algorithms for a few network measures. These works have generally been motivated by some particular application scenario. To employ SNA for criminal investigation, Kerschbaum and Schaad \cite{kerschbaum2008privacy} propose secure multiparty protocols for closeness and betweenness centrality measures. They assume a different definition of betweenness centrality compared to the classical one specified by Brandes \cite{brandes2001faster}). Furthermore, they employ the adjacency matrix graph representation and hence their protocols are not very efficient for sparse graphs. Kukkala et al. \cite{kukkala2016privacy} propose secure multiparty protocols for degree distribution, closeness centrality, Pagerank and K-shell decomposition for the  adjacency matrix representation. Employing the ORAM primitive and the edgelist graph representation, the Pagerank and K-shell decomposition protocols designed in the current paper asymptotically outperforms the algorithms in the above work. Asharov et al. \cite{asharovsecure} study the a set of centrality measures over multilayer networks distributedly held by a set of individuals/organizations.  They propose information theoretic secure MPC protocols for distance based centrality measures.

\section{Conclusion}\label{conclusion}

In this paper, we highlight on the unexplored potential of studying distributed sensitive social networks. As our vision, we propose the use of the tools and techniques of multiparty computation to study the network properties of the distributedly held sensitive networks while ensuring  the privacy of involved individuals. The use of MPC overcomes all the drawbacks of the previously employed techniques (like surveys, interviews and sampling) for studying distributedly held networks. We present a list of theoretical, implementation, deployment and differential privacy challenges which need to be overcome to realize the proposed vision. Hence, a thorough investigation of all the network measures including community detection, degree distribution, betweenness centrality, eigenvector centrality, etc., in the MPC setting will require the confluence of different domains, including cryptography, security, network science and behavioral psychology.


\bibliographystyle{IEEEtran}

\bibliography{mybib}

\section*{Appendix}
\begin{subappendices}
\renewcommand{\thesection}{\Alph{section}}%


\section{Distributed Sensitive Social Networks}\label{appendix_dssn}

In this section we discuss on a set of DSSNs that have most often appeared in the research literature. For each network, we present its node-edge description, previous investigations on these networks, and a list of possible analysis that can be performed on them in future.\\   

\subsection*{Sexual Relationship Network}
The network consists of individuals modeled as nodes while the edges between them denotes the existence of a sexual relationship between the concerned individuals. This is a completely distributed network which has the highest quotient of sensitivity compared to other social networks. The works in the literature have focused on determining network characteristics such as assortativity and show that positive and negative assortativity induce different spreading patterns in the network \cite{rocha2010information}. It has also been observed that some of these networks deviate from the traditional model of preferential attachment that gives rise to the power law degree distribution \cite{rocha2010information}. The data for all the studies conducted so far in the literature have been gathered through face-to-face interviews, random sampling, etc that involve a trusted third party \cite{hivurl}.  


\subsection*{Romantic Relationship Network}
Another distributed sensitive social network that models relationships similar to the sexual relationship network is the romantic relationship networks. It has a high quotient of sensitivity at 0.85 and is currently available only in the fully distributed form. The nodes continue to be individuals while the presence of an edge will denote that the corresponding individuals have dated each other in the past or are currently dating. Studies on romantic relationships among school children have shown that nodes that represent romantic partners tend to have similar structural characteristics such as centrality indexes \cite{kreager2016friends}. However, such a study is not easy to perform when it comes to workplace romance \cite{forbesoffice}. Due to the presence of negative perceptions, workers usually tend to hide their workplace relationships. However, these relationships are known to influence the workplace environment and hence are needed to be investigated for a better performance of the organization. Given the highly debated pros and cons of workplace romance, the study of this DSSN can throw some light into the dynamics of these interactions on the work culture. 


\subsection*{Informal Networks - Trust and Enmity Networks}
Enmity network and trust network are often referred to as \textit{informal networks}, given that they model the informal interactions between the employees, which is usually inconsistent with the interactions induced by the formal hierarchy of the organization. These networks model individuals as nodes and the presence of an edge from node $A$ to node $B$ will denote the feeling of hatred in an enmity network and trust in a trust network. These networks also fall under the category of completely distributed social networks. Unlike the previously considered networks, the feelings of trust and hatred are not necessarily mutual. Hence the underlying network is best modeled as a directed graph. The feeling of animosity between colleagues does not foster a healthy working environment, hence it is of great interest for the organization to monitor the network of hatred on their employees. A study of the dynamics of hate relations would also help form more effective teams and take constructive measures to foster a better environment. There has not been any work, to the best of our knowledge, that investigates the enmity network. On the other hand, the importance of the study of the trust network of employees in an organization has appeared in the literature \cite{krackhardt200116}. There are instances where the progress of a team has been slow when the leader occupies a  weak position in the trust network of the organizations \cite{henry2001creative}. There are third party agencies, like \textit{Keyhubs}, who offer  services to perform workplace social analytics and allow harnessing the potential of informal networks \cite{keyhubs}. They identify influential employees in the network, unravel the team dynamics and help bring in a transformation by extracting structural metrics of the informal network. They collect the data regarding informal networks through surveys, and hence only act as a data processor on behalf of their clients. Such a method of analysis boils down to having a trusted third party model, wherein Keyhubs learns the entire data gathered for analysis. \\

\subsection*{Supply Chain Network}
Supply chain networks is another example of a fully distributed DSSN, where the nodes represent organizations that are a part of a buyer-seller system and a directed edge from $A$ to $B$ denotes that $A$ supplies goods/raw materials to $B$. Such a network is considered sensitive since organizations do not wish to disclose their buyer-seller interactions in the fear of compromising competitive advantage \cite{fridgen2015supply}. At the same time, organizations are interested in determining the  importance of the structural position they occupy in the supply chain since it has been well established that the position of an organization in the supply chain can greatly determine its influence over the flow of goods/raw materials in the entire network. There are works in the literature that look at assessing the risk associated with an organization based on its structural position \cite{kerschbaum2011secure,fridgen2015supply}. This is done by using secure multiparty protocols designed to compute the betweenness centrality of the nodes in the network. \\

\subsection*{Financial Transaction Network}
Financial transaction network is an example of a partially distributed DSSN. As the name suggests, this network models the financial transactions between the considered entities. The nodes in the network represent individuals with a bank account. An edge from $A$ to $B$ will exist in this network if $A$ has initiated more than a threshold number of transactions to $B$. One can also model the network as a weighted network, where the edge weights denote the frequency of transactions between the corresponding nodes. The accounts of the individuals who are a part of the network may belong to different banks. Each bank has the information of the transactions initiated by it's own customers. Hence, each bank is aware of the partial data (sub-graph) of the complete transaction network.  A preliminary analysis of the topology of the financial transactions network in Austria 
has been performed previously \cite{kyriakopoulos2009network}. \\

\section{Secure Computation and Ideal Functionalities}\label{intro_mpc}

\subsection{A Introduction to MPC}
Secure multiparty computation (MPC) is a sub-field of cryptography that deals with the design of  protocols/algorithms that allows a set of $n$ individuals/parties $P_1, P_2, \ldots, P_n$ with private inputs $x_1, x_2, \ldots, x_n$ to compute a public function $f(x_1, x_2, \ldots, x_n)$. The protocol must be  such that the parties don't learn any information from the run of the protocol, other than what can be gathered from the output of the protocol and their respective inputs. The design of the protocol must also take the behavior of the parties into consideration. As discussed previously, the behavior of the parties could be termed either as honest or as corrupt. We define the notion of an \textit{adversary} in order to model any attack on the protocol launched by the corrupt parties. The adversary is assumed to control the set of corrupt parties in the protocol. Instead of modeling corruption of individual corrupt parties,  the security of an MPC protocol can be be defined by modeling the adversary 
accordingly. For example, the adversary can be assumed to be either \textit{passive} or \textit{active}. A \textit{passive} adversary can only read the messages sent/received by the corrupt parties and cannot influence their behavior, whereas, an \textit{active} adversary can influence the actions of the corrupt parties during the run of the protocol. The security of an MPC protocol is defined by comparing the ideal setting and the real simulation, details of which are described below. \\

The ideal setting refers to the scenario of computing a functions $f$ where we assume the existence of a trusted third party $TTP$, to whom all the parties $P_1, P_2, \ldots, P_n$ send their private inputs $x_1, x_2, \ldots, x_n$ respectively. The $TTP$ computes the function $f(x_1,x_2,\ldots,x_n)$ and sends the desired output to all the parties. This protocol for  computing $f$ in the presence of a TTP is denoted by $F_{\pi}$. The protocol $F_{\pi}$ is termed as the \textit{ideal functionality} for the protocol $\pi$, where $\pi$ denotes the MPC protocol designed to achieve the secure computation of $f$.

In the real setting we run the protocol $\pi$, as a part of which the set of parties $P_1,P_2,\ldots,P_n$ perform computation, exchange messages, make random choices, etc.
The sequence of messages that the party sent/received and the random choices that the party made during the run of the protocol is defined as the \textit{view} of a party. Intuitively, an MPC protocol $\pi$ \textit{securely} computes the function $f$ if the adversary gains no extra advantage when the function $f$ is computed using the protocol $\pi$ over the its ideal functionality $F_{\pi}$. In the case of a \textit{semi-honest} or a \textit{passive} adversary, we can formally define security as follows: \\

\begin{definition}{(Passively secure MPC protocol)}\label{security_definition} \\
An MPC protocol $\pi$ securely computes a function $f$ if there exist a \textit{simulator} $S$ i.e. an efficient probabilistic algorithm which generates the view of the adversary in the real world given just the input and output of the corrupt parties i.e.
\begin{align*}
\{view_i\}_{P_i \in \mathcal{C}} \equiv S(\{x_i,y_i\}_{P_i \in \mathcal{C}})
\end{align*}
where $\mathcal{C}$ represents the set of corrupt parties and $view_i,x_i,y_i$ represents the view, input and output of party $P_i$ respectively.
\end{definition}  
\mbox{ }  \\
On similar lines we can define security of MPC protocols for the case of active adversaries. A more thorough discussion on the security of MPC protocol can be found in \cite{mpcbook}. \\

The efficiency of an MPC protocol is generally measured on the basis of three criteria, namely, computation, communication and round complexity. The \textit{computation complexity} accounts for the total computation that all the parties need to perform before and during the execution of the protocol. The \textit{communication complexity} is a measure of the net amount of information exchanged between the parties.
 A secure multiparty protocol can be broken down into rounds, such that those instructions which can be executed in parallel and are independent of each other are clubbed as a single round of execution.  
The total number of rounds employed during the run of the protocol accounts for the \textit{round complexity}. 

\subsection{Universal Composability (UC) Framework}
For all the secure multiparty protocols designed in the paper, few functionalities/sub-protocols will be repeatedly used in and across protocols. Hence, we will abstract out these sub-protocols as a set of ideal functionalities which will be available as primitive operations. This abstraction is feasible given the UC theorem \cite{canetti2001universally}, which states that a protocol $\pi_1$ is secure when composed with a protocol $\pi_2$ if the protocol $\pi_1$ can be securely implemented when composed with the ideal functionality $F_{\pi_2}$. The UC theorems allows us to design protocols using ideal functionalities for protocols that are already implemented in the literature. Further in this section we will present a set of ideal functionalities which will be employed for designing secure MPC protocols for network measures.

\subsection{Arithmetic Black Box $\mathcal{F}_{ABB}$ }
The arithmetic black box $\mathcal{F}_{ABB}$ is one of the most primitive ideal functionalities employed while designing complex MPC protocols. The ideal functionality is characterized by a ring $\mathbb{Z}_M$ and it provides the following four operations:

\begin{itemize}
\item \textbf{Store}: A party $P$ can store an element $a \in \mathbb{Z}_M$ in the black box by using the following command: 
\begin{align*}
[a] \leftarrow_P a 
\end{align*}
The handle $[a]$ is sent to all the parties and the stored value $a$ can be manipulated using the handle $[a]$.
\item \textbf{Addition}: All the parties can securely add ($+_M$) the values stored at handle $a$ and $b$ and store the result in the location with the handle $c$, using the following command:
\begin{align*}
[c] \leftarrow [a] + [b] 
\end{align*}
\item \textbf{Multiplication}: All the parties can securely multiply ($*_M$) the values stored at handle $a$ and $b$ and store the result in the location with the handle $c$ using the following command:
\begin{align*}
[c] \leftarrow [a] * [b] 
\end{align*}
\item \textbf{Release}: A value $a$ stored at handle $[a]$ can be released in public by all the parties using the following command:
\begin{align*}
a \leftarrow [a] 
\end{align*}
\end{itemize}

The arithmetic back box $\mathcal{F}_{ABB}$ has been realized for various security models using cryptographic primitives like secret sharing \cite{rabin1989verifiable} and homomorphic encryption \cite{damgaard2003universally}.  We can further extend this ideal functionality with a few more commonly used and previously implemented operations. We append the basic ideal functionality $\mathcal{F}_{ABB}$ with the following operations:
\begin{itemize}
\item \textbf{Comparison}: All the parties can securely compare the values stored at handle $a$ and $b$ and store $1$ or $0$ at handle $c$ depending on whether $a$ is greater than $b$ or not using the following command:
\begin{align*}
[c] \leftarrow [a] \overset{?}{>}  [b] 
\end{align*}
\item \textbf{Equality}: All the parties can securely check for equality of the values stored at handle $a$ and $b$ and store $1$ or $0$ at handle $c$ depending on whether $a$ equals $b$ or not using the following command:
\begin{align*}
[c] \leftarrow [a] \overset{?}{=}  [b] 
\end{align*}
\end{itemize}

The extension of the basic arithmetic black box to allow the above mentioned operations has been well explored in the MPC literature \cite{damgaard2006unconditionally}. For the sake of brevity, we will refer to the extended arithmetic black box as the arithmetic black box $\mathcal{F}_{ABB}$ itself.\\

The exact computation, communication and round complexity associated with these operation vary depending on the underlying implementation of these operations. Furthermore, optimization of the complexity of the operations of the ideal functionality $\mathcal{F}_{ABB}$ has been a constant area of research \cite{lipmaa2013secure}. To make the MPC protocol designed in the paper independent of the implementation of the above mentioned operations we will evaluate the complexity of each designed protocol by the number of $\mathcal{F}_{ABB}$ operations invoked during the run of the protocol. This will also provide us with a simpler metric to compare the complexity of the designed MPC protocols with their equivalent non-secure variants. 

\subsection{Oblivious RAM for Secure Computation $\mathcal{F}_{ORAM}$}
The oblivious RAM (ORAM) primitive was first introduced by Goldreich and Ostrovsky in their seminal paper published in 1996 \cite{goldreich1996software}. Consider the scenario where a client has stored her data on a remote server in an $n$ length array $A$. Further, the client wishes to query the block $A[i]$ of the array without leaking the index $i$ to the server. The ORAM primitive allows the client to access the block $A[i]$ in time $poly.log(n)$ while ensuring that the memory access pattern of the array does not leak the index $i$ to the server. Since 1996, the ORAM primitive has been employed in various domains including data outsourcing \cite{stefanov2011towards}, secure processors \cite{maas2013phantom} and secure computation \cite{gentry2013optimizing}. In the secure computation scenario, the ORAM primitive has been employed to implement the ideal functionality $\mathcal{F}_{ORAM}$, which provides the following operations:
\begin{itemize}
\item \textbf{Initialization}: All the parties can initialize a new $n$ length array in the $\mathcal{F}_{ORAM}$ using the following command:
\begin{align*}
[A] \leftarrow oram\_initialize(n)
\end{align*}
The handle $[A]$ to the newly initialized array is sent to all the parties. 
\item \textbf{Read}:  All the parties can read the $i^{th}$ block of the array $A$ securely given $[i]$ using the following command:
\begin{align*}
[x] \leftarrow oram\_read([A], [i])
\end{align*}
The handle $[x]$ contains the content $A[i]$.
\item \textbf{Write}: All the parties can write the content stored at $[x]$ to the $i^{th}$ block of the array $A$ securely given $[i]$ using the following command:
\begin{align*}
oram\_write([A], [i], [x])
\end{align*}
\end{itemize}
Numerous implementations of $\mathcal{F}_{ORAM}$ exist where the $oram\_read$ and $oram\_write$ operations mentioned above can be implemented using $poly.log (n)$ operations of the $\mathcal{F}_{ABB}$ \cite{wang2014scoram}. Circuit ORAM designed by Shi et al. \cite{wang2015circuit} is till date the best known implementation of $\mathcal{F}_{ORAM}$. Circuit ORAM allows one to perform the read and write operations for arbitrary block sizes using only $\mathcal{O}((\log n)^2)$ $\mathcal{F}_{ABB}$ operations. The ORAM initialization cost for the Circuit ORAM is $\mathcal{O}(n(\log n))$. For all the protocols designed in the paper the ORAM initialization cost for all the ORAM array used will be amortized over the set of read/write operations performed on the ORAM arrays.

\subsection{Secure Computation over Rational Numbers $\overset{\sim}{\mathcal{F}}_{ABB}$}

In numerous network analysis algorithms, computation is performed over fractional values. Since the arithmetic black box $\mathcal{F}_{ABB}$ has so far only allowed integral operations on $\mathbb{Z}_M$, there is a need to extend the  $\mathcal{F}_{ABB}$ to perform computation over rational numbers. Catrina and Saxena \cite{catrina2010secure} introduced the ideal functionality $\overset{\sim}{\mathcal{F}}_{ABB}$ which provides basic arithmetic operations including addition, multiplication and division over rational numbers in the fixed-point representation. A fixed point number has two parts, the integer and the fractional part separated by the radix point. The ideal functionality $\overset{\sim}{\mathcal{F}}_{ABB}$ is characterized by specifying the number of decimal bits required in the integral and the fractional part. A rational number will be distinguished from an integer by using the symbol tilda ($\sim$) over the handle of the variable. The primitive operations provided by the ideal functionality $\overset{\sim}{\mathcal{F}
}_{ABB}$ are specified below:

\begin{center}
\begin{tabular}{l l l  }
1. & Store &\hspace*{2cm}  $[\overset{\sim}{a}] \leftarrow_P \overset{\sim}{a} $  \\ 
2. & Addition &\hspace*{2cm}  $[\overset{\sim}{c}] \leftarrow [\overset{\sim}{a}] + [\overset{\sim}{b}]$  \\  
3. & Multiplication &\hspace*{2cm} $[\overset{\sim}{c}] \leftarrow [\overset{\sim}{a}] * [\overset{\sim}{b}]$ \\
4. & Division &\hspace*{2cm} $[\overset{\sim}{c}] \leftarrow [\overset{\sim}{a}] / [\overset{\sim}{b}]$ \\
5. & Comparison &\hspace*{2cm} $[\overset{\sim}{c}] \leftarrow [\overset{\sim}{a}] \overset{?}{>}  [\overset{\sim}{b}]$ \\
6. & Equality &\hspace*{2cm} $[\overset{\sim}{c}] \leftarrow [\overset{\sim}{a}] \overset{?}{=}  [\overset{\sim}{b}]$ \\
7. & Release &\hspace*{2cm} 	$\overset{\sim}{a} \leftarrow [\overset{\sim}{a}]$
\end{tabular}
\end{center}

\end{subappendices}

\end{document}